\newcommand{\langProp}[1]{\ensuremath{\mathcal{L}(#1)}\xspace}	
\newcommand{\true}{\ensuremath{1}}
\newcommand{\false}{\ensuremath{0}}
\newcommand{\modelSet}[1]{\ensuremath{\mathsf{Mod}(#1)}}
\newcommand{\inc}{\ensuremath{\mathcal{I}}}
\newcommand{\algfour}[2]{\ensuremath{\mathsf{Alg4}(#1, #2)}}
\newcommand{\algoneempty}{\ensuremath{\mathsf{Alg1}}}
\newcommand{\algtwoempty}{\ensuremath{\mathsf{Alg2}}}
\newcommand{\algthreeempty}{\ensuremath{\mathsf{Alg3}}}
\newcommand{\algthreepempty}{\ensuremath{\mathsf{Alg3'}}}
\newcommand{\algfourempty}{\ensuremath{\mathsf{Alg4}}}
\newcommand{\both}{\textsf{both}\xspace}
\newcommand{\kb}{\ensuremath{{K}}}
\newcommand{\atoms}{\ensuremath{\mathsf{At}}}
\newcommand{\incdrastic}{\ensuremath{\mathcal{I}_{d}}}
\newcommand{\incc}{\ensuremath{\mathcal{I}_{c}}}
\title{Privacy-Preserving Inconsistency Measurement}
\author{Carl Corea\inst{1} \and
Timotheus Kampik\inst{2} \and
Nico Potyka\inst{3}}
\authorrunning{Corea et al.}
\institute{University of Koblenz, Germany \and
Umeå University, Sweden; SAP, Germany
 \and Cardiff University, UK \\
\email{ccorea@uni-koblenz.de, tkampik@cs.umu.se, PotykaN@cardiff.ac.uk}
}
\begin{document}

\maketitle

\vspace{-.5cm}
\begin{abstract}
    We investigate a new form of (privacy-preserving) inconsistency measurement for multi-party communication. Intuitively, for two knowledge bases $K_A,K_B$ (of two agents $A,B$), our results allow to quantitatively assess the degree of inconsistency for $K_A \cup K_B$ without having to reveal the actual contents of the knowledge bases. Using secure multi-party computation (SMPC) and cryptographic protocols, we develop two concrete methods for this use-case and show that they satisfy important properties of SMPC protocols -- notably, \emph{input privacy}, i.e., jointly computing the inconsistency degree without revealing the inputs.
\end{abstract}

\section{Introduction}
\label{sec:intro}
In multi-agent systems, agents may have to cooperate without being allowed to share their internal knowledge with each other.
For example, revealing internal knowledge or beliefs of an agent may violate (external) privacy requirements, or agents are both cooperating and competing and do not want to reveal knowledge that may give others a competitive advantage. 

In our work, 
we consider multi-agent systems where the agents carry internal knowledge or beliefs in form of propositional logic knowledge bases (KBs), and assume agents may not be allowed to reveal the contents (i.e., formulas) of their KBs to each other. Still, in order to assess the ability to cooperate, it may be necessary for the agents to verify whether, or to what extent, the knowledge bases are \emph{consistent} with each other. Consider the following simplified example from the financial domain:

\begin{example}
Consider two agents $A$ and $B$, with (own) propositional logic KBs $K_A$ and $K_B$ (with agents having knowledge on credit applications, and customers can have different statuses, creditworthiness, or be on a ban list), with:
\begin{align*}
    &K_A=\{\neg(\mathit{banList}\wedge\mathit{creditWorthy})\}\\
    &K_B=\{\mathit{platinumStatus};\,\mathit{platinumStatus}\rightarrow\mathit{creditWorthy};
    \,\mathit{banList}\}
\end{align*}

\end{example}

Clearly, $K_A \cup K_B$ is inconsistent (we will define inconsistency later), and knowing this may be crucial for the agents. 
But as stated, we assume $A$ and $B$ do not want to reveal the formulas in their KBs. 
To solve this issue, we present a novel approach for, what we call, \emph{privacy-preserving inconsistency measurement}. The core idea is that we build on cryptographic protocols from the field of secure multi-party computation, which allow multiple agents to jointly compute a function $f(K_A,K_B)$ without revealing $K_A$ and $K_B$. Here, we propose algorithms to compute (as $f$) different \emph{inconsistency measures} \cite{Thimm:2019a} for $K_A\cup K_B$. 

Our results allow the agents---in a privacy-preserving way---to know
\begin{enumerate*}[label=\roman*)]
    \item whether their knowledge is \emph{consistent}, and
    \item to what \emph{degree} their knowledge disagrees (e.g., wrt. an inconsistency degree, the KBs may still be \emph{sufficiently consistent} s.t. the alignment may be ``good enough'' for collaborating).
\end{enumerate*}
Here, our contributions are as follows:
\begin{itemize}
    \item We present a novel approach for privacy-preserving inconsistency measurement; specifically, for two KBs $K_A,K_B$, we show how to compute two specific inconsistency measures for $K_A\cup K_B$ without revealing $K_A,K_B$ (Section \ref{sec:approaches}). To this aim, we show how private set intersections of sets of KB interpretations can be computed to measure various aspects of (in)consistency.
    \item We evaluate the developed methods by showing important privacy- and runtime complexity properties (Section \ref{sec:discussion}). 
\end{itemize}
We discuss preliminaries in Section~\ref{sec:prelim} and conclude in Section \ref{sec:conlusion}. Proofs are shown in the appendix.

\vspace{-.2cm}
\section{Preliminaries}
\label{sec:prelim}

\subsection{Knowledge Bases, Inconsistency Measurement}
In this work, we consider agents carrying internal knowledge in form of propositional logic knowledge bases. 
For this, let $\atoms$ be some fixed propositional signature, i.\,e., a (possibly infinite) set of propositions, and let $\langProp{\atoms}$ be the corresponding propositional language constructed using the connectives $\wedge$, $\vee$
and $\neg$. 
\vspace{-.4cm}
\begin{definition}\label{def:kb}
    A knowledge base $\kb$ is a finite set of formulas $\kb\subset\langProp{\atoms}$. 
\end{definition}
For a set of formulas $X$, we denote the set of contained propositions as $\atoms(X)$.
An interpretation $\omega$ on \atoms\ is a function $\omega:\atoms\rightarrow\{\false,\true\}$ (where $0$ stands for false and $1$ stands for true). Let $\Omega(\atoms)$ denote the set of all interpretations for \atoms. An interpretation $\omega$ \emph{satisfies} (or is a \emph{model} of) an atom $a\in\atoms$, denoted by $\omega\models a$, iff $\omega(a)=\true$. The satisfaction relation $\models$ is extended to formulas in the usual way. For $\Phi\subseteq\langProp{\atoms}$ we also define $\omega\models \Phi$ if and only if $\omega\models\phi$ for every $\phi\in\Phi$. For a set of formulas $X$, the set of models is 
$\modelSet{X}=\{\omega\in \Omega(\atoms)\mid \omega\models X\}$. 
If $\modelSet{X}=\emptyset$ we write $X\models \perp$ and say that $X$ is \emph{inconsistent}. 

%
An inconsistency \emph{measure} $\inc$ is a function that assigns a non-negative numerical value to a knowledge base.
The concrete behaviour of inconsistency measures is driven by rationality postulates. In this work, we assume inconsistency measures $\inc$ satisfy the basic property of \emph{consistency} (for a KB $K$):
\begin{description}
    \item[Consistency \textsf{CO}] $\inc(\kb)=0$ iff $K\not\models\perp$ 
\end{description}

Numerous inconsistency measures have been proposed (see \cite{Thimm:2019a} for a survey). In this work, we consider the \emph{drastic} inconsistency measure $\incdrastic$ \cite{hunter2008measuring} and the \emph{contension} inconsistency measure $\incc$ \cite{grant2011measuring}, which we define below. In order to define the contension measure we need some additional background on three-valued logic \cite{Priest:1979}. A three-valued interpretation is a function $\nu: \atoms \rightarrow \{0,1,\both\}$, which assigns to every atom either $0$, $1$ or \both, where $0$ and $1$ correspond to $false$ and $true$, respectively, and \both denotes a conflict. 
Assuming the \emph{truth order} $\prec_T$ with $0\prec_T \both \prec_T 1$, the function $\nu$ can be extended to arbitrary formulas as follows: $\nu(\alpha\wedge\beta) = \min_{\prec_T}(\nu(\alpha),\nu(\beta))$, $\nu(\alpha\vee\beta) = \max_{\prec_T}(\nu(\alpha),\nu(\beta))$, $\nu(\neg \alpha)=1$ if $\nu(\alpha)=0$, $\nu(\neg \alpha)=0$ if $\nu(\alpha)=1$, and $\nu(\neg \alpha)=\both$ if $\nu(\alpha)=\both$.
We say an interpretation $\nu$ satisfies a formula $\alpha$, denoted by $\nu \models^3 \alpha$, iff $\nu(\alpha)=1$ or  $\nu(\alpha)= \both$.
We are now ready to define the considered inconsistency measures.

\begin{definition}[Considered Inconsistency Measures]
    Given a knowledge base $K$, define $\incdrastic, \incc$ via:
    \begin{align*}
        \incdrastic(\kb) & = \left\{\begin{array}{cc}                                
				1                & \text{if~} \kb\models\perp                              \\
				0                & \text{otherwise}
				\end{array}\right.&
		\incc(\kb)       & = \min\{|\nu^{-1}(\both)|\mid\nu\models^{3} \kb\}    \\[1ex]
    \end{align*}    
\end{definition}
\vspace{-.8cm}
\begin{example}
    Consider $\kb_1=\{a; a\rightarrow b; \neg b\wedge\neg a; c\}$, then we have that
        $\incdrastic(\kb_1)=1$ and $\incc(\kb_1)=2$.
    (for $\incc$, note that the three-valued interpretation $\nu_1$ with $\nu_1(a)=\both$; $\nu_1(b)=\both$; $\nu_1(c)=1$ is the only three-valued model that assigns $\both$ to a minimal number of atoms). 
\end{example}

In this work---for two knowledge bases $K_A, K_B$ of parties $A,B$, respectively---we compute $\incdrastic(K_A\cup K_B)$ and an upper-bound for $\incc(K_A\cup K_B)$, without $A$ and $B$ having to reveal the contents of their knowledge bases to each other. To allude to some of our results, this will be achieved by comparing the \emph{models} for the individual knowledge bases (also in a privacy-preserving way). For example, we can exploit that $\modelSet{\kb_1}\cap\modelSet{\kb_2}=\emptyset$ iff $\kb_1\cup\kb_2 \models \perp$, which allows to verify consistency without revealing formulas. Intuitively, the interpretations should also not be revealed, which we will show how to handle. In the following subsections, we discuss important notions and methods from a security perspective.

\subsection{Cryptographic Techniques}
In this work, we consider (asymmetric) encryption schemes, or cryptosystems, that can securely encode and decode messages with algorithmic techniques.

\begin{definition}[Cryptosystem, \cite{sen2013homomorphic}]
    Let $\mathcal{M}$ be a set of messages, called a \emph{message space}, and let $\rho \in \mathbb{N}$ be a \emph{security parameter}. Then, an encryption scheme is a tuple $(\textbf{K},\textbf{E},\textbf{D})$, where
    \begin{itemize}
        \item $\textbf{K}$ is a (key generation) function that takes the security parameter $\rho$ and returns a key pair $(k_e,k_d)$ for encryption/decryption, with $k_e\in\mathcal{K}_e, k_d\in\mathcal{K}_d$ (with $\mathcal{K}_e,\mathcal{K}_d$ being key spaces).
        \item $\textbf{E}$ is an (encryption) function $\textbf{E}:\mathcal{K}_e\times \mathcal{M}\rightarrow \mathcal{C}$ that returns a ciphertext for a plaintext $m$, where $\mathcal{C}$ is a ciphertext space.
        \item $\textbf{D}$ is a (decryption) function $\textbf{D}:\mathcal{K}_d\times\mathcal{C}\rightarrow\mathcal{M}$ that takes a ciphertext and outputs a plaintext $m$, s.t. if $c=\textbf{E}(k_e,m)$ then \emph{Probability}$[\textbf{D}(k_d,c)\neq m]$ is negligible, i.e., \emph{Probability}$[\textbf{D}(k_d,c)\neq m]\leq 2^{-\rho}$.
    \end{itemize}
\end{definition}

We consider encryption functions that are probabilistic, i.e., $\textbf{E}$ can return different ciphertexts even for two equal inputs (on the other hand, \textbf{D} is deterministic) \cite{sen2013homomorphic}. To clarify, given two plaintexts $m_1=m_2=1$ and a key pair $k_e,k_d$ produced by the encryption scheme, we have that \emph{Probability}$[\textbf{E}(k_e, m_1)=\textbf{E}(k_e, m_2)]\leq 2^{-\rho}$, but $\textbf{D}(k_d,\textbf{E}(k_e, m_1))=\textbf{D}(k_d,\textbf{E}(k_e, m_2))$. 
This is also referred to as the encryption scheme being  IND-CPA secure (ciphertext indistinguishability under chosen plaintext attacks) \cite{cramer2015secure}.
We use $\textbf{E}(k_e, m_1)\equiv\textbf{E}(k_e, m_2)$ to denote that two ciphertexts carry ``semantically" the same value, even though the ciphertexts are not identical. 

For the technical development of our techniques, we assume two communicating parties, where both parties act via a \emph{honest-but-curious} adversarial model \cite{cramer2015secure}, i.e., parties do not deviate from the protocol but may try to infer additional information from the data they obtain. We comment on the effects of a party taking on other adversarial models in Section \ref{sec:discussion}. 

\subsection{Secure Multi-Party Computation}
An important cryptographic field we build on is that of secure multi-party computation (SMPC) \cite{cramer2015secure}. The goal of SMPC approaches is to allow multiple parties $P_1,...,P_n$ to compute a function $f$ over their (respective) inputs $x_1,...,x_n$, without revealing the inputs to the other parties. Any protocol performing this computation should satisfy the following properties:
\begin{description}
    \item[Input Privacy (\textsf{IP})] Inputs should not be revealed during computation.
    \item[Correctness (\textsf{Cor})] The revealed output is the actual result of $f(x_1,...,x_n)$.
\end{description}
We use the term ``privacy-preserving" to denote that a computation satisfies \textsf{IP}. 


To develop privacy-preserving inconsistency measurement techniques, we will devise protocols that can compute (various aspects of) intersections of sets of knowledge base models while satisfying \textsf{IP, Cor}. This type of protocol is referred to as private set intersection (PSI) (cf. Section \ref{subsec:PSI}). PSI protocols build on so-called homomorphic encryption schemes, which we introduce next.

\subsection{Homomorphic Encryption}
\label{subsec:homomorph-enc}

Homomorphic encryption \cite{yi2014homomorphic} is a cryptographic method that allows certain mathematical operations to be performed directly on encrypted data. Specifically, we consider \emph{fully homomorphic} encryption schemes \cite{gentry2009fully}, which allow addition and multiplications of numerical values while remaining encrypted.


\begin{definition}[Homomorphic encryption scheme, \cite{sen2013homomorphic}] 
    Let $\mathcal{M}$ be a message space, and let $\rho$ be a security parameter. Then, a homomorphic encryption scheme is a quadruple $(\textbf{K},\textbf{E},\textbf{D},\circ)$ as follows:
    \begin{itemize}
        \item $\textbf{K}, \textbf{E}, \textbf{D}$ are key-generation-, encryption- and decryption functions as before.
        %
        %
        \item $\circ$ is an operator for which it holds that for all messages $m_1,m_2\in\mathcal{M}$: if $m_3=m_1\circ m_2$, and $c_1=\textbf{E}(k_e,m_1)$, and $c_2=\textbf{E}(k_e,m_2)$, then \emph{Probability}$[\textbf{D}(k_d,c_1 \circ c_2)\neq m_3]$ is negligible.
    \end{itemize}    
\end{definition}
In other words, a homomorphic encryption scheme is an encryption scheme with the property that the operation $\circ$ is correctly preserved when performing it on the encrypted ciphertexts themselves.

In the following, we assume encryption schemes that are fully homomorphic (i.e., where $\circ$ can be either addition (+) or multiplication ($\times$)), s.t. for all $m_1,m_2\in\mathcal{M}, k_e\in\mathcal{K}_e$ we have (cf. \cite{yi2014homomorphic}):
    \begin{align*}
        &\textbf{E}(k_e,m_1) + \textbf{E}(k_e,m_2) \equiv \textbf{E}(k_e,m_1+m_2),
        &m_1 + \textbf{E}(k_e,m_2) \equiv \textbf{E}(k_e,m_1+m_2),\\
        &\textbf{E}(k_e,m_1) \times \textbf{E}(k_e,m_2) \equiv \textbf{E}(k_e,m_1\times m_2),
        &m_1 \times \textbf{E}(k_e,m_2) \equiv \textbf{E}(k_e,m_1\times m_2).
    \end{align*}

\begin{example}
    Let a key pair $k_e,k_d$ produced by a fully homomorphic encryption scheme. For $m=5$ and $c=\textbf{E}(k_e,m)+2$, we have $\textbf{D}(k_d,c)=7$.
\end{example}

\begin{remark}\label{remark:poly}
    We make the standard assumption that the size of the ciphertexts remains polynomially bounded (cf. the property of circuit-privacy in \cite{armknecht2015guide}). This 
    ensures that the ciphertext obtained by performing an operation $\circ$ on two inputs is hard to distinguish from a ciphertext obtained by encrypting a plaintext $m$.
\end{remark}

This extends to vectors via element-wise operations.
A survey of fully homomorphic encryption schemes can be found in \cite{yi2014homomorphic}. 

\subsection{Private Set Intersection}\label{subsec:PSI}
PSI protocols \cite{morales2023private} are subtypes of SMPC that allow to compute  the intersection of two sets without revealing the rest of the sets. There are many applications for PSI, for example, finding (only) the common friends in the contact lists of two parties without disclosing the full contact lists to each other. As an example, consider the following baseline PSI protocol.

\begin{example}
Let two parties $A,B$ each with a set containing exactly one integer, respectively, $x$ and $y$. To compute PSI, $A$ employs a fully homomorphic encryption scheme to generate a key pair $k_e,k_d$ and sends $\textbf{E}(k_e,x)$ to $B$. Then, $B$ computes $c = r*(\textbf{E}(k_e,x) - y)$ (which is a ciphertext, cf. Remark \ref{remark:poly}), where $r$ is a random nonzero integer chosen by $B$. $A$ now computes $\textbf{D}(k_d,c)$: If the result is 0, $x$ and $y$ are identical, otherwise, $\{x\}\cup \{y\}=\emptyset$. In the latter case, $A$ cannot infer any information about $y$ (as $r$ is chosen by $B$). In any case, $B$ cannot infer anything about $x$ (beyond the guarantees of the encryption scheme) as $B$ only obtains ciphertext. The protocol can be performed symmetrically.
\end{example}

In this work, we consider versions of PSI protocols where only the size of the intersection is revealed. As we will show, inconsistency can then be characterized with various aspects of intersection sizes for sets of interpretations. 

\section{Approaches for Privacy-Preserving Inconsistency Measurement}\label{sec:approaches}

For the remainder, we fix two parties $A,B$ with respective knowledge bases $K_A,K_B$. Also, we fix a fully homomorphic encryption scheme $(\textbf{K},\textbf{E},\textbf{D},\circ)$ and a corresponding key-pair $k_e,k_d$. The goal of this section is then two-fold: first, we develop an SMPC protocol allowing to compute $\incdrastic(K_A\cup K_B)$; then, to allow for a more gradual measure, we develop an SMPC protocol allowing to compute an upper-bound for $\incc(K_A\cup K_B)$ (both protocol satisfying \textsf{IP} and \textsf{Cor}). For the remainder, we assume the KBs $K_A,K_B$ on their own are consistent (and the task is to assess the consistency of $K_A\cup K_B$). 

From \textsf{IP}, it is immediate that no formulas must be revealed for the computations. 
Instead, in our protocols, the parties will exchange their respective models; as discussed, consistency can then be verified by checking whether $\modelSet{K_A}\cap\modelSet{K_B}\neq \emptyset$. An important remark here is that the interpretations/models should also not be revealed in plain form. Otherwise, it would be possible to disjunctively write each interpretation as a conjunction of atoms, which would yield a formula in disjunctive normal form that is equivalent to the KB. Thus, we introduce PSI-based protocols that allow to \emph{privately} compute the intersection of $\modelSet{K_A}$ and $\modelSet{K_B}$. For this, we need some further notation.

For any interpretation $\omega$ over $\atoms$, we will encode $\omega$ as a bit sequence of $1s$ and $0s$, which indicates the
truth value of the atoms 
in alphabetical order.
\begin{example}
    Let $\atoms=\{a,b,c\}$; then we write $\omega_1 = 101$ to encode  
        $\omega_1(a)=1$, and  $\omega_1(b)=0$, and $\omega_1(c)=1$.
\end{example}
This encoding will be used in various encryption processes. For example, we can directly encrypt interpretations by encrypting the encoding -- in the example, $\textbf{E}(k_e, 101)$.
%
A further remark is that, also wrt. Axioms (1)-(4), this shorthand notation is useful for comparing interpretations via a bitwise comparison. For example, for two interpretations 11 and 10, a bitwise comparison 01 indicates that the first digit is identical and the second digit differs.

We are now ready to define a core protocol for comparing \emph{two} interpretations. This core protocol will then later be used in two subsequent protocols (for $\incdrastic, \incc$).

\subsection{General Protocol for Privacy-Preserving Comparison of (Two) Interpretations}


Assume two parties $A$ and $B$ who each have one interpretation ($\omega_A, \omega_B$).
For example, this interpretation could be derived from their knowledge bases under the closed-world assumption (if an atom is not entailed by the knowledge base, it is supposed to be false). Importantly, the two parties agree on a shared set of atoms $\atoms$ (needed to produce the encoding of the interpretations).
We now want to verify if these interpretations are compatible. More precisely, Algorithm~\ref{alg:computeNumDifferentAssignmentsBetweenTwoInterpretations} specifies an SMPC protocol that takes as input two interpretations $\omega_A, \omega_B$ and returns the number of atoms that they interpret differently.
For the protocol, we recall the introduced encoding that allows to represent interpretations as binary numbers. For any binary number $w$, let $\mathit{len}(w)$ denote the number of digits of $w$, and $w_i$ the $i^{th}$ bit of $w$.

\begin{algorithm}[H]
\caption{($\algoneempty$) Compute Number of Differing Truth Assignments Given Interpretations $A, B$}
\label{alg:computeNumDifferentAssignmentsBetweenTwoInterpretations}
\begin{algorithmic}[1]
\REQUIRE Shared set of atoms $\atoms$, Interpretations $\omega_A, \omega_B$ (over $\atoms$, in shorthand notation)
\ENSURE Number of truth assignments differing between $\omega_A$ and $\omega_B$

\STATE $A$ generates key pair $(k_e, k_d)$
\STATE $A$ generates a vector $v=\langle \omega_{A_1},...,\omega_{A_{\mathit{len}(\omega_A)}} \rangle$
\STATE $A$ computes $v_{enc}=\langle \textbf{E}(k_e,\omega_{A_1}),...,\textbf{E}(k_e\omega_{A_{\mathit{len}(\omega_A)}}) \rangle$
\STATE $A$ sends $v_{enc}$ to $B$
\STATE $B$ generates a vector $v_B=\langle \omega_{B_1},...,\omega_{B_{\mathit{len}(\omega_B)}} \rangle$
\STATE $B$ computes $v_{A\oplus B} = v_{enc}-v_{B}$ \hfill \COMMENT{``XOR" operation; B cannot read result}
\STATE $B$ computes $n = \sum_{i=1}^{|v_{A\oplus B}|} (v_{A\oplus B}[i])^2$ \COMMENT{B cannot read result}
\STATE $B$ sends $n$ to $A$
\STATE $A$ computes $n' = \textbf{D}(k_d, n)$
\RETURN $n'$
\end{algorithmic}
\end{algorithm}

\begin{example}
    Assume two parties $A,B$ with interpretations $\omega_A=110$ and $\omega_B=101$, respectively.
    \begin{itemize}
        \item A generates a key pair.
        \item A generates the vector $v=\langle 1,1,0\rangle$, resp. $v_{enc}=\langle \textbf{E}(k_e,1), \textbf{E}(k_e,1), \textbf{E}(k_e,0)\rangle$.
        \item B generates the vector $v_B=\langle1,0,1\rangle$.
        \item B computes $v_{A\oplus B}=v_{enc}-v_B =\\\langle \textbf{E}(k_e,1)-1,\textbf{E}(k_e,1)-0,\textbf{E}(k_e,0)-1\rangle$ (note that each position is still a ciphertext that B cannot read).
        \item B computes $n = \sum_{i=1}^{|v_{A\oplus B}|} (v_{A\oplus B}[i])^2$ (sum of absolute values; result is a ciphertext).
        \item A receives $n$ and returns $n' = \textbf{D}(k_d,n) = 2$        
    \end{itemize}
    In result, $A$ can infer that $\omega_A,\omega_B$ differed in 2 assignments, but, importantly, cannot infer at which assignments the interpretations differ. The protocol can be performed symmetrically to produce the result for $B$.
\end{example}

\begin{theorem}
    Algorithm \ref{alg:computeNumDifferentAssignmentsBetweenTwoInterpretations} satisfies $\textsf{IP}, \textsf{Cor}$.
\end{theorem}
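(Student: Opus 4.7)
The plan is to verify \textsf{Cor} and \textsf{IP} separately, leveraging the stated properties of the fully homomorphic encryption scheme.

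For \textsf{Cor}, I would trace the protocol step by step, using the homomorphic identities from Section~\ref{subsec:homomorph-enc}. For each index $i$, the entry $v_{A\oplus B}[i] = \textbf{E}(k_e,\omega_{A_i}) - \omega_{B_i}$ is, by the additive homomorphism, semantically equivalent to $\textbf{E}(k_e,\omega_{A_i} - \omega_{B_i})$. Squaring and using the multiplicative homomorphism gives $(v_{A\oplus B}[i])^2 \equiv \textbf{E}(k_e,(\omega_{A_i} - \omega_{B_i})^2)$. Because $\omega_{A_i},\omega_{B_i} \in \{0,1\}$, the plaintext $(\omega_{A_i} - \omega_{B_i})^2$ equals $1$ if and only if the two bits differ and $0$ otherwise, i.e., it is exactly the XOR of the two bits. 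Summing over $i$ and applying additive homomorphism once more yields $n \equiv \textbf{E}\bigl(k_e,|\{i : \omega_{A_i}\neq \omega_{B_i}\}|\bigr)$, so after decryption $n' = |\{i : \omega_{A_i}\neq \omega_{B_i}\}|$ up to a failure probability bounded by $2^{-\rho}$ from the cryptosystem definition. This is precisely the quantity claimed in the \textbf{Output}.

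For \textsf{IP}, I would argue separately about the views of $B$ and $A$. The view of $B$ during the protocol consists solely of the vector of ciphertexts $v_{enc}$ (all subsequent computations are local on $B$'s side and operate on ciphertexts that $B$ cannot decrypt, since $k_d$ is held only by $A$). By the IND-CPA property stated in Section~\ref{subsec:homomorph-enc}, these ciphertexts are computationally indistinguishable from encryptions of any other bit vector of the same length, so $B$ learns nothing about $\omega_A$ beyond $|\atoms|$, which is public. The view of $A$ consists of the ciphertext $n$ returned by $B$; by Remark~\ref{remark:poly} (circuit privacy / polynomially bounded ciphertext size), $n$ is indistinguishable from a fresh encryption of its underlying plaintext, so $A$ learns only the plaintext value $n'$, which is the prescribed output rather than an input.

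The main obstacle I expect is being precise about what \textsf{IP} rules out versus what is unavoidable from seeing the output itself. In particular, once $A$ decrypts $n'$ and combines it with its own $\omega_A$, $A$ clearly constrains $\omega_B$ (e.g., if $n'=0$ then $\omega_A=\omega_B$); however, this is information carried by the function value and not by the \emph{protocol transcript}, so it does not violate \textsf{IP} as stated. I would therefore make explicit that \textsf{IP} is interpreted in the standard simulation-style sense: each party's view can be simulated from its own input together with the function output, which follows here because $B$'s view is a vector of IND-CPA ciphertexts and $A$'s view is a single ciphertext whose decryption equals the output guaranteed by \textsf{Cor}.
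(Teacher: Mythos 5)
Your proposal is correct and follows essentially the same route as the paper: correctness by tracing the homomorphic computation to show $\sum_i(\omega_{A_i}-\omega_{B_i})^2$ is the Hamming distance, and input privacy by observing that $B$ sees only IND-CPA ciphertexts while $A$ learns only the agreed-upon output (with output-intrinsic leakage in edge cases not counting as a violation). If anything you are slightly more careful than the paper's proof, which asserts $v_{A\oplus B}[i]\in\{0,1\}$ when the plaintext difference actually lies in $\{-1,0,1\}$ and it is precisely the squaring that maps it to the XOR value.
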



The presented core protocol allows to correctly compare the number of differing truth assignments for two interpretations, without revealing them. 
This core protocol will now be leveraged for computing further measures, specifically, by extending the protocol to take as input not only one interpretation, but two respective sets of interpretations/models by $A$ and $B$.

\subsection{Privacy-Preserving Computation of $\incdrastic$}
We recall the definition of $\incdrastic$ and parties $A,B$ with knowledge bases $K_A,K_B$. To compute $\incdrastic$, we build on the fact that $\modelSet{K_A}\cap\modelSet{K_B}=\emptyset$ iff $K_A\cup K_B\models\perp$. To leverage Algorithm \ref{alg:computeNumDifferentAssignmentsBetweenTwoInterpretations} for computation, we create two bit sequences as follows:
First, $A,B$ agree on a shared set of atoms $\atoms$. Then, both parties independently create a truth table showing satisfaction of their knowledge base, where the truth-assignments (rows) are in ascending binary order, ensuring equally ordered and exhaustive enumeration of all possible truth value combinations.
\begin{example}\label{ex:truthTables}
    Consider two agents $A$ and $B$, each with their KBs $K_A$ and $K_B$, respectively, with
        $K_A=\{a \land b\}$ and $K_B=\{\neg a\}$.
    Then, $A$ and $B$ construct the following truth tables:
    \[
    A:
    \begin{array}{c|c|c}
    a & b & \models K_A \\
    \hline
    0 & 0 & 0 \\
    0 & 1 & 0 \\
    1 & 0 & 0 \\
    1 & 1 & 1 \\
    \end{array}
    \quad
    B:
    \begin{array}{c|c|c}
    a & b & \models K_B \\
    \hline
    0 & 0 & 1 \\
    0 & 1 & 1 \\
    1 & 0 & 0 \\
    1 & 1 & 0 \\
    \end{array}
    \]
\end{example}
Both parties then take the last column, which is a bit-sequence. In the example, we get the two sequences $S_A=0001$ and $S_B=1100$. It is important to note that (as the row index for both tables correspond due to the ascending order), each index $i$ over both sequences exactly encodes whether the $i^{th}$ assignment of truth values satisfies $K_A$, resp., $K_B$. We then leverage these sequences as input for Algorithm \ref{alg:computeNumDifferentAssignmentsBetweenTwoInterpretations} to verify consistency. For this, we define a slight variation of Algorithm \ref{alg:computeNumDifferentAssignmentsBetweenTwoInterpretations}, denoted Algorithm \ref{alg:computeNumDifferentAssignmentsBetweenTwoInterpretations}$^{binary}$, where we change Lines 6 and 7 as follows:
\begin{enumerate*}[label=\roman*)]
    \item 6: $B \text{ computes } v_{A\oplus B} = 1- (v_{enc}*v_{B})$;
    \item 7: $B \text{ computes } n = \prod_{i=1}^{|v_{A\oplus B}|} (v_{A\oplus B}[i])^2$.
\end{enumerate*}

This has the following impact on the algorithm output: For line 6, if the two multiplicants differ or are both 0 (corresponding to inconsistency or non-satisfaction), the entry computed in line 6 is 1 (indicating a distance). Otherwise, it is 0 (indicating correspondence and satisfaction). In result, if at least one row satisfies both knowledge bases, the result of Algorithm \ref{alg:computeNumDifferentAssignmentsBetweenTwoInterpretations}$^{binary}$ is simply 0 (cf. line 7). If all rows differ, the returned value in this way is exactly 1, hiding the number of rows. This binary version of Algorithm \ref{alg:computeNumDifferentAssignmentsBetweenTwoInterpretations} can be leveraged as follows.

\begin{algorithm}[H]
\caption{($\algtwoempty$) Compute $\incdrastic(K_A\cup K_B)$}
\label{alg:computeID}
\begin{algorithmic}[1]
\REQUIRE Shared set of atoms $\atoms$, knowledge bases $K_A,K_B$
\ENSURE $\incdrastic(K_A\cup K_B)$

\STATE $A,B$ create respective truth tables in ascending binary order (over $\atoms$).
\STATE $A,B$ obtain (via the last column) their private sequences $S_A, S_B$.

\STATE $A,B$ compute $d=$ Algorithm \ref{alg:computeNumDifferentAssignmentsBetweenTwoInterpretations}$^{binary}$ wrt. $\atoms, S_A, S_B$
\RETURN $d$
\end{algorithmic}
\end{algorithm}
\vspace{-.4cm}
\begin{example}
    We recall the KBs and truth tables from Example \ref{ex:truthTables}, with the two sequences $S_A=0001, S_B=1100$ (this relates to lines 1-2 of \algtwoempty). Then, Algorithm \ref{alg:computeNumDifferentAssignmentsBetweenTwoInterpretations}$^{binary}$ is computed wrt. $S_A, S_B$ and stored as $d$. Regarding that Algorithm, recall that if (the assignments) of at least row match and satisfy both knowledge bases, $d=0$, and $d=1$ otherwise. In the example, indices 1,2,4 do not match, and, while the third index (i.e., the interpretation $\omega(a)=1;\omega(b)=0$) matches, this interpretation is not a model. In turn, the output is 1.
\end{example}


\begin{theorem}
    Algorithm \ref{alg:computeID} satisfies \textsf{IP}, \textsf{Cor}.
\end{theorem}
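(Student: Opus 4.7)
The proof proceeds in two parts, mirroring the statement: I would first establish Correctness and then Input Privacy, in both cases leveraging the analysis already done for Algorithm~\ref{alg:computeNumDifferentAssignmentsBetweenTwoInterpretations} and its modified variant.

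For \textbf{Correctness}, the first step is to observe that, by construction of the truth tables in ascending binary order over the shared atoms $\atoms$, the sequences $S_A$ and $S_B$ are the characteristic functions of $\modelSet{K_A}$ and $\modelSet{K_B}$ with respect to the same enumeration of $\Omega(\atoms)$. Consequently, $\modelSet{K_A}\cap\modelSet{K_B}\neq\emptyset$ iff there exists an index $i$ with $S_A[i]=S_B[i]=1$, which by propositional semantics is equivalent to $K_A\cup K_B\not\models\perp$. Next I would unfold Algorithm~\ref{alg:computeNumDifferentAssignmentsBetweenTwoInterpretations}$^{binary}$: by the fully-homomorphic property, the $i$-th component of $v_{A\oplus B}$ decrypts to $1 - S_A[i]\cdot S_B[i]$, which is $0$ precisely at common models and $1$ elsewhere. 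Squaring is idempotent on $\{0,1\}$, so $n$ decrypts to $\prod_i (1 - S_A[i]\,S_B[i]) \in\{0,1\}$; this product equals $0$ iff a common model exists and $1$ otherwise. Substituting into the definition of $\incdrastic$ yields $d = \incdrastic(K_A\cup K_B)$.

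For \textbf{Input Privacy}, I would reuse the argument used for Algorithm~\ref{alg:computeNumDifferentAssignmentsBetweenTwoInterpretations}. Party $B$ receives only the ciphertext vector $v_{enc}$ and operates homomorphically; by IND-CPA security of $(\textbf{K},\textbf{E},\textbf{D},\circ)$, $B$'s view is computationally indistinguishable from ciphertexts of arbitrary content, so nothing about $K_A$ (beyond what the output eventually reveals) is leaked. Party $A$ receives only $n$, whose decryption is exactly the output bit $d\in\{0,1\}$; in particular, $A$ learns neither $S_B$ nor even the number of indices at which the two sequences disagree. This loss of count-level information is precisely what distinguishes the binary variant from the original and motivates replacing the sum by a product.

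The \textbf{main obstacle} is the subtle requirement that the ciphertext $n$ returned to $A$ does not inadvertently leak structural information about the intermediate vector $v_{A\oplus B}$ through ciphertext shape, size, or noise pattern; otherwise $A$ could distinguish a freshly encrypted bit from the product of many ciphertexts and potentially recover partial information about $S_B$. This is handled by appealing to Remark~\ref{remark:poly}, which guarantees circuit privacy up to polynomially bounded ciphertext size, making $n$ indistinguishable from a fresh encryption of $d$. Once that is granted, Cor and IP follow by combining the equivalence established above with the privacy guarantees of the base protocol.
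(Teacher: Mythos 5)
Your proposal is correct and follows essentially the same route as the paper's proof: correctness via the observation that the identically ordered truth tables make $S_A, S_B$ characteristic sequences of the model sets, that each entry $1 - S_A[i]S_B[i]$ is $0$ exactly at common models, and that the product is $0$ iff one exists; input privacy because $B$ sees only ciphertexts and $A$ learns only the single output bit. Your version is somewhat more explicit about the cryptographic underpinnings (IND-CPA, circuit privacy via Remark~\ref{remark:poly}), which the paper leaves implicit, but the argument is the same.
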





\subsection{$\text{Privacy-Preserving Approximation of }\incc$}

%
We continue with a gradual measure based on $\incc$.
For this, recall parties $A,B$ with $K_A, K_B$. Then, consider $\algthreeempty$, which works over \emph{models} of $K_A$ and $K_B$:
\begin{algorithm}[H]
\caption{($\algthreeempty$) Compute Smallest Distinct Number of Mismatching Assignments for Any Pair in Two Sets of Models}
\begin{algorithmic}[1]
\REQUIRE Shared set of atoms $\atoms$, two sets of models: $\modelSet{\kb_A}$, $\modelSet{\kb_B}$
\ENSURE Smallest number of different assignments over all combinations of models

\STATE $A$ initializes an empty set $S = \{\}$
\FORALL{elementA $\in \modelSet{K_A}$}
    \FORALL{elementB $\in \modelSet{K_B})$}
        \STATE Both $A$ and $B$ perform \textbf{Algorithm 1} wrt. $\atoms$, elementA, elementB.
        \STATE $A$ stores the result in $S$
    \ENDFOR
\ENDFOR
\RETURN MIN($S$)
\end{algorithmic}
\end{algorithm}

\begin{example}\label{ex:incC}
    Assume two parties A,B (with $K_A, K_B$) and $\modelSet{K_A}=\{111,110\}$, $\modelSet{K_B}=\{100,101\}$.
    \begin{itemize}
        \item Alg. 3 will iterate over all combinations of models and yields:
             $\mathit{Alg1}(111,100)= 2$;
             $\mathit{Alg1}(111,101)= 1$;             
             $\mathit{Alg1}(110,100)= 1$;
             $\mathit{Alg1}(110,101)= 2$.
         \item The returned result is MIN$(\{1,2\})=1$.
    \end{itemize}
\end{example}

First, observe that the result of Example \ref{ex:incC} can be interpreted s.t. no combination of models agree ($\modelSet{\kb_A\cup\kb_B}=\emptyset$). 
We now show how the result of Algorithm 3 can be used to approximate the contension measure. The idea is that we use the results of Algorithm 3 to derive a set of \emph{three-valued} interpretations $\mathit{Mod}_3(\kb_A\cup\kb_B)$ (to plug into $\incc$).
For this, the following will be useful.
\begin{lemma}
\label{lemma_adding_B}
Let $F$ be a formula and let $I$ be a three-valued interpretation that satisfies $F$.
Let $J$ be a three-valued interpretation obtained from $I$ by changing the
interpretation of a single atom to $\both$. Then
$I(F) = J(F)$ or $J(F)=\both$.
\end{lemma}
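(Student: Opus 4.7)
The plan is to prove the lemma by structural induction on $F$, using an auxiliary binary relation on three-valued truth values that captures exactly the conclusion. Define $x \sqsubseteq y$ to hold iff $x = y$ or $y = \both$. Then the conclusion ``$I(F) = J(F)$ or $J(F) = \both$'' is just ``$I(F) \sqsubseteq J(F)$'', and the hypothesis on $I,J$ is that $I(a) \sqsubseteq J(a)$ for every atom $a$ (they agree everywhere except on the flipped atom, where $J$ takes value $\both$). So it suffices to prove the more general claim: if $I(a) \sqsubseteq J(a)$ for every atom $a$, then $I(F) \sqsubseteq J(F)$ for every formula $F$.

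The base case (atoms) is exactly the hypothesis. For the inductive step, I would verify once and for all that each of the three connectives is monotone with respect to $\sqsubseteq$, viewed as operations on three-valued truth values. For negation, $x \sqsubseteq y$ immediately gives $\neg x \sqsubseteq \neg y$ because $\neg \both = \both$ and otherwise negation is a bijection on $\{0,1\}$. For conjunction via $\min_{\prec_T}$, I would check the nontrivial cases where $y_1 = \both$ and $x_1 \in \{0,1\}$: if $x_2 = y_2 = 0$, both $\min_{\prec_T}(x_1,0)$ and $\min_{\prec_T}(\both,0)$ equal $0$; if $x_2 = y_2 = 1$, then $\min_{\prec_T}(x_1,1)=x_1$ and $\min_{\prec_T}(\both,1)=\both$, so the left side is $\sqsubseteq$ the right side; and if $y_2 = \both$, the right side is $\both$, so the relation holds trivially. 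Disjunction via $\max_{\prec_T}$ is dual.

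Applying this structural monotonicity to $F$ gives $I(F) \sqsubseteq J(F)$, i.e., $J(F) = I(F)$ or $J(F) = \both$, which is the claim. (Note the hypothesis that $I$ satisfies $F$, meaning $I(F) \in \{1,\both\}$, is not strictly needed for the displayed equation, though it is what makes the lemma useful in the subsequent application: it guarantees that in either case $J$ also satisfies $F$.) The main obstacle is simply the bookkeeping of the finitely many cases in the monotonicity check for $\min_{\prec_T}$ and $\max_{\prec_T}$; once that is in hand, the induction itself is entirely routine.
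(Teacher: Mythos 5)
Your proof is correct, and it is at heart the same structural induction as the paper's, but the way you package it is genuinely different and arguably cleaner. The paper proves the lemma directly for the ``one atom flipped to $\both$'' situation by a case analysis on the value $I(F)$ at each connective (and it first reduces to the connectives $\neg$ and $\wedge$ by appealing to expressibility, which strictly speaking needs a small three-valued De Morgan argument since the language also contains $\vee$). You instead isolate the relation $x \sqsubseteq y$ iff $x = y$ or $y = \both$, observe that the hypothesis is exactly $I(a) \sqsubseteq J(a)$ on atoms, and prove the stronger statement that all three connectives are monotone for $\sqsubseteq$; your case checks for $\min_{\prec_T}$ and $\max_{\prec_T}$ are the right ones and they go through (in particular the case $x_1 \in \{0,1\}$, $y_1 = \both$, $x_2 = y_2 \in \{0,1,\both\}$, with the second argument handled by commutativity). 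Two things your formulation buys: you handle $\vee$ directly by duality rather than via a reduction, and your generalized hypothesis already allows \emph{several} atoms to be changed to $\both$, so Corollary~\ref{cor:two-three-valued-models} follows immediately rather than via the paper's chain $J_0, J_1, \dots, J_k$ of single-atom changes. Your side remark that the assumption ``$I$ satisfies $F$'' is not actually used in the displayed conclusion also matches the paper's proof, which covers the case $I(F)=0$ as well.
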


\begin{corollary}\label{cor:two-three-valued-models}
Let $I$ be a two-valued model of $K$ and let $J$ be a three-valued interpretation obtained from $I$ by changing the
interpretation of atoms to $\both$. Then $J$ is a three-valued model of $K$.
\end{corollary}

For two KBs $\kb_1, \kb_2$, we now use this result to define a function that can transform the two-valued interpretations $Mod(K_1), Mod(K_2)$ into a set of three-valued interpretations $M_3$ s.t. for every $m\in M_3: m\models_3 \kb_1\cup\kb_2$.

\begin{definition}
    Let two knowledge bases $\kb_1, \kb_2$ and let two interpretations $i_1, i_2$ over $\atoms$ s.t. $i_1\models\kb_1$, $i_2\models\kb_2$. Then, define a three-valued interpretation via the function $f(i_1,i_2)$, where
    \[
        f(i_1, i_2)(j) =
        \begin{cases}
        i_1(j), & \text{if } i_1(j) = i_2(j) \\
        \both, & \text{otherwise}
        \end{cases}
    \]
\end{definition}
Corollary \ref{cor:two-three-valued-models} implies that a three valued interpretation obtained via $f$ is a three-valued model for $\kb_1\cup\kb_2$.
With a slight abuse of notation, we 
let:
\begin{align*}
   &{} f(Mod(\kb_1), Mod(\kb_2)) = \quad \{ f(i_1, i_2) \mid i_1 \in Mod(\kb_1), i_2 \in Mod(\kb_2) \}. 
\end{align*}

We now show the relationship of Algorithm 3 to $\incc$.
\begin{proposition}
Let $x$ be the inconcistency value computed by Algorithm 3. Then $x \geq \incc(\kb_A \cup \kb_B)$.
\end{proposition}
However, 
Algorithm 3 can overestimate the inconsistency value.
\begin{example}
Let $\kb_A = \{a, a \rightarrow b_1 \wedge b_2\}$ and $\kb_B = \{a, a \rightarrow \neg b_1 \wedge \neg b_2\}$.
Then $\modelSet{\kb_A} = \{111\}$ and  $\modelSet{\kb_B} = \{100\}$. Hence, $f(\modelSet{\kb_A},\modelSet{\kb_B}) = \{1\both\both\}$ and 
Algorithm 3 will return $2$. However, $\both 00$, $\both 01$, $\both 10$, $\both 11$ are also three-valued models 
of $\kb_A \cup \kb_B$. Therefore $\incc$ is $1$.
\end{example}
While Algorithm 3 cannot compute the contension inconsistency value exactly, it will never underestimate the inconsistency 
(it gives an upper bound on the inconsistency value). Importantly, it will also never report a positive inconsistency value
when the knowledge bases are, in fact, mutually consistent.
\begin{proposition}
If $\kb_A \cup \kb_B$ is consistent, then Algorithm 3 will return $0$.
\end{proposition}

Let us note that Alg. 3 satisfies our privacy guarantee. 
\begin{theorem}
    Algorithm 3 satisfies $\textsf{IP}$. 
        
\end{theorem}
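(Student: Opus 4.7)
The plan is to reduce the claim to the input-privacy property already established for $\algoneempty$. First I would note that, apart from $A$'s purely local bookkeeping (initializing $S$, appending scalars to $S$, and computing $\mathrm{MIN}(S)$), the only inter-party communication inside $\algthreeempty$ is the sequence of sub-invocations of $\algoneempty$ on each pair $(\mathrm{elementA},\mathrm{elementB}) \in \modelSet{\kb_A}\times\modelSet{\kb_B}$. No formula, model, or plaintext bit of either party's input is ever transmitted outside these sub-invocations.

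Next I would apply the $\textsf{IP}$ guarantee of $\algoneempty$ iteration by iteration. In each sub-call $B$ sees only a ciphertext vector $v_{\mathit{enc}}$ of $\mathrm{elementA}$, which by IND-CPA security of the underlying fully homomorphic scheme reveals nothing about $\mathrm{elementA}$; symmetrically, $A$ learns only the scalar count of positions at which $\mathrm{elementA}$ and $\mathrm{elementB}$ disagree, which does not pin down $\mathrm{elementB}$ among interpretations of equal Hamming distance to $\mathrm{elementA}$. Consequently, no single element of $\modelSet{\kb_A}$ or $\modelSet{\kb_B}$ is disclosed in any iteration.

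Then I would address composition. After all $|\modelSet{\kb_A}|\cdot|\modelSet{\kb_B}|$ iterations, $B$'s aggregate view is a list of fresh ciphertexts sampled independently of the underlying bits by the probabilistic encryption scheme, so a simulator with access only to $|\modelSet{\kb_A}|$ and the shared atom signature $\atoms$ can reproduce it up to computational indistinguishability; $A$'s aggregate view is a list of pairwise Hamming distances, each of which is already admitted by $\algoneempty$'s per-call $\textsf{IP}$. Neither party therefore obtains enough information to reconstruct the other's set of models, let alone the underlying knowledge base.

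The main obstacle is fixing a precise compositional reading of $\textsf{IP}$, since $A$ inevitably sees the full multiset of pairwise distances rather than only $\mathrm{MIN}(S)$. I would resolve this by appealing to the definition in Section~\ref{sec:prelim}: the \emph{inputs} are the knowledge bases (equivalently, the sets of models), not the intermediate distance values. Because every cross-party message inherits the IP guarantee of $\algoneempty$, and because the aggregate of scalar distances does not uniquely identify any individual model of the opposing party, the composition still satisfies $\textsf{IP}$ under the honest-but-curious adversarial model assumed throughout Section~\ref{sec:approaches}.
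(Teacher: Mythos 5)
Your proposal is correct and follows essentially the same route as the paper's proof: the paper likewise argues that neither party learns concrete models because $B$ only ever handles ciphertexts and $A$ only learns Hamming distances that it cannot map back to specific models, with the leakage of the full multiset of distances (and of $|\modelSet{\kb_A}|$) treated as a confidentiality issue rather than an $\textsf{IP}$ violation. Your version is simply a more explicit, simulator-style elaboration of the same two-sentence argument the paper gives.
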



While Algorithm 3 satisfies \textsf{IP}, it
reveals more than the output\footnote{This is also referred to as the SMPC properties of confidentiality (\textsf{Con}).}: Clearly, a) $B$ learns the number of models provided by $A$, and b) $A$ learns all distinct differences in truth assignments over all model combinations.
We therefore show a slight variation ($\algfourempty$) which allows to counteract this. 

\begin{algorithm}
\caption{($\algfourempty$) Compute Smallest Distinct Number of Mismatching Assignments for Any Pair in Two Sets of Models, Satisfying \textsf{Con}}
\begin{algorithmic}[1]
\REQUIRE Shared set of atoms $\atoms$, two sets of models: $\modelSet{\kb_A}$, $\modelSet{\kb_B}$
\ENSURE Smallest number of different assignments over all combinations of models
\STATE $A$ initializes $\modelSet{K_A}$' as multiset of length $|2^\atoms|$, containing all and only elements of $\modelSet{K_A}$
\STATE $B$ initializes an empty list $S = <>$
\FORALL{elementA $\in \modelSet{K_A}'$}
    \FORALL{elementB $\in \modelSet{K_B})$}
        \STATE Both $A$ and $B$ perform \textbf{Algorithm 1} wrt. $\atoms$, elementA, elementB.
        \STATE $B$ stores the (encrypted) result in $S$
    \ENDFOR
\ENDFOR
\STATE $B$ initializes an empty list $L = <>$
\FORALL{$i \in \{0, ..., |\atoms|\}$}
    \STATE $B$ computes $L_i \leftarrow \prod_{d \in S} (i - d)$
\ENDFOR
\FORALL{$i \in \{0, ..., |\atoms|\}$}
    \STATE $B$ computes $p \leftarrow$ random prime number
    \STATE $B$ computes $L_i \leftarrow (\prod_{0 ... i} L_i)^{p - 1}$
\ENDFOR
\STATE $A$ decrypts $L$
\RETURN Index of the first element in $L$ that is $0$
\end{algorithmic}
\end{algorithm}

In line 1, $A$ creates a multiset with models of $K_A$ of size $2^{|\atoms|}$ (possibly containing duplicates). This is a padding that ensures $A$ does not reveal the number of models. Lines 2-8 are analogous to $\algthreeempty$, using the padded multiset. Lines 9-12 are an encrypted computation by $B$. $L$ is a list from 0 to $|\atoms|$, the possible range for $\incc$. Then, we compute the distance between every $L_i$ and all results in $S$: $i - d = 0$ must hold for at least one $d \in S$. We are only interested in the minimum of $0, ..., |\atoms|$ that is a match (lines 13-18), hence we obscure all left of the minimum by means of prime encryption and set all right of the minimum to zero (numbers that are not zero are ``meta-encrypted'' and stay encrypted even after $A$ decrypts the result). The smallest distinct number of mismatching assignments is the first index in $L$ of an element that is $0$. This solves the problems with $\textsf{Con}$ exhibited in $\algthreeempty$: i) $B$ cannot know the number of $A$'s models as $A$ sends a padded multiset, ii) $A$ does not learn all distinct differences in truth assignments over all combinations.

\begin{example}
    Recall Example \ref{ex:incC} with  $\modelSet{K_A}=\{111,110\}$, $\modelSet{K_B}=\{100, \\ 101\}$. First, $A$ will create a padded list of size $2^{|\atoms|}$, here: $\modelSet{K_A}'=\{111,110, \\ 111, 110,111,110,111,110\}$. $A$ and $B$ perform lines 3-8 which yields a list of (encrypted) differences: $\algoneempty$ is performed for all (model, model)-tuples in $\modelSet{K_A}' \times \modelSet{_B}$, yielding the (encrypted) list $\{\textbf{E}(k_e,2), \textbf{E}(k_e,1),...\}$. $B$ then checks which of the potential distances in $\{0, 1, 2, 3\}$ exist in this list by executing lines 9-12, yielding the list $\langle 2, 0, 0, 2 \rangle$ (again, encrypted). The multiplication of every element in the list with its predecessors and subsequent prime encryption (by $B$) in lines 13-14 results in the list $\langle \mathsf{enc}_p, 0, 0, 0 \rangle$, where $\mathsf{enc}_p$ is the prime-encrypted $2$. Finally, $A$ decrypts the list. However, $\mathsf{enc}_p$ is useless (as it was meta-encrypted). $A$ can only infer from $\langle \mathsf{enc}_p, 0, 0, 0 \rangle$ that $\incc = 1$ (index of the first $0$).
\end{example}

%
%
    
While $\algfourempty$ comes with improvement wrt. $\textsf{Con}$, $A$ has to create a padded multiset of exponential size. In the following we discuss this trade-off.

\section{Discussion}\label{sec:discussion}

We start by showing the upper and lower bounds of runtime-complexity in Table \ref{tab:runtimecomplexity} (proofs in appendix).
\begin{table}
\centering
\begin{tabular}{|c|c|c|}
\hline
 & $\mathcal{O}$ & $\Omega$\\ \hline
Algorithm 1 & $\mathcal{O}(k+|\atoms|)$ & $\Omega(k+|\atoms|)$   \\ 
Algorithm 2 & $\mathcal{O}(k+2^{|\atoms|})$  &$\Omega(k+2^{|\atoms|})$   \\ 
Algorithm 3 & $\mathcal{O}(k+2^{2^{|\atoms|}}*|\atoms|)$  &  $\Omega(k+|\atoms|)$\\ 
Algorithm 4 &  $\mathcal{O}(k+2^{2^{|\atoms|}}*|\atoms|)$  & $\Omega(k+{2^{|\atoms|}}*|\atoms|)$ \\ \hline
\end{tabular}
\caption{Runtime-complexity (upper ($\mathcal{O}$)/lower ($\Omega$)) of the developed algorithms wrt. $\atoms$; $k$ = cost of key generation.}
\label{tab:runtimecomplexity}
\end{table}

While $\algthreeempty$ violates $\textsf{Con}$, it can approach polynomial scaling in best-cases. 
$\algfourempty$ retains its exponential component due to padding, trading off complexity and privacy requirements, depending on the use-case. $\algtwoempty$ scales exponentially.

For the discussion of algorithms we have considered an \emph{honest-but-curious} adversarial model. Threats in this setting are bounded by the compliance with $\textsf{IP}$. 
For adversarial models such as \emph{malicious adversary} (participants may deviate from the protocol), the guarantees given by $\textsf{IP}$ also hold. However, intuitively, such a threat model can affect the correctness of the results: A malicious adversary can deliberately provide altered models of his/her own knowledge base, e.g., flipping all bits (note that the ability of the adversary to manipulate ciphertexts is mitigated by IND-CPA security). Likewise, the adversary could provide models even if the KB is inconsistent. While preventing the adversary to provide fake models cannot be mitigated, various methods exist to prove the consistency of the own KB (without revealing it) via zero-knowledge proofs \cite{bellare2012foundations}, which can be put before our algorithms if needed. 

One should also be aware of risks by repeated queries. Given no restrictions wrt. the number of queries sent, $A$ could 
reveal information about $B$'s KB by altering the input for different queries. For this risk, it is important to consider what we actually reveal.
Accordingly, we observe the worst-case probabilities with which one agent can successfully guess a model in another agents' KB.
\begin{proposition}
    $A$ can correctly guess a model in $K_B$ with a probability of at least $\frac{1}{|\modelSet{K_A}|}$ if $K_A$ and $K_B$ are consistent; if $K_A$ and $K_B$ are inconsistent, $A$ can correctly guess with a probability of at least $\frac{1}{|\Omega(\atoms) \setminus \mathsf{Mod}_{<\algfourempty}|}$, where $\mathsf{Mod}_{<\algfourempty} := \{m | m \in \Omega(\atoms),  \algfour{\modelSet{K_a}}{\{m\}} < \algfour{\modelSet{K_A}}{\modelSet{K_B}}\}$.
\end{proposition}
For both cases, $A$ may straightforwardly reveal a formula equivalent to $B'$s KB by measuring $g(\{m\}, \modelSet{K_B})$ for all $m \in \Omega(\atoms))$, where $g \in \{\algtwoempty, \algthreeempty, \algfourempty\}$. 

\section{Conclusion}\label{sec:conlusion}
We have introduced novel methods for privacy-preserving inconsistency measurement. By leveraging SMPC and homomorphic encryption, the proposed algorithms enable agents to collaboratively evaluate the consistency of their KBs without revealing sensitive information. 
While the approach successfully implements input privacy, it also highlights trade-offs in runtime complexity and potential risks in adversarial settings. Overall, the framework advances the state of the art by enabling cooperative inconsistency measurement in privacy-critical settings. 
Future work can, for example, focus on methods for privately computing interpolants/common knowledge or disagreement between $>2$ parties \cite{potyka2018measuring,ribeiro2020measuring}.

\bibliographystyle{splncs04}
\bibliography{refs}

\begin{thebibliography}{10}
\providecommand{\url}[1]{\texttt{#1}}
\providecommand{\urlprefix}{URL }
\providecommand{\doi}[1]{https://doi.org/#1}

\bibitem{armknecht2015guide}
Armknecht, F., Boyd, C., Carr, C., Gj{\o}steen, K., J{\"a}schke, A., Reuter, C.A., Strand, M.: A guide to fully homomorphic encryption. Cryptology ePrint Archive  (2015)

\bibitem{bellare2012foundations}
Bellare, M., Hoang, V.T., Rogaway, P.: Foundations of garbled circuits. In: Proceedings of the 2012 ACM conference on Computer and communications security. pp. 784--796 (2012)

\bibitem{cramer2015secure}
Cramer, R., Damg{\aa}rd, I.B., et~al.: Secure multiparty computation. Cambridge University Press (2015)

\bibitem{gentry2009fully}
Gentry, C.: Fully homomorphic encryption using ideal lattices. In: Proceedings of the forty-first annual ACM symposium on Theory of computing. pp. 169--178 (2009)

\bibitem{grant2011measuring}
Grant, J., Hunter, A.: Measuring consistency gain and information loss in stepwise inconsistency resolution. In: European Conference on Symbolic and Quantitative Approaches to Reasoning and Uncertainty. pp. 362--373. Springer (2011)

\bibitem{hunter2008measuring}
Hunter, A., Konieczny, S., et~al.: Measuring inconsistency through minimal inconsistent sets. KR  \textbf{8}(358-366), ~42 (2008)

\bibitem{morales2023private}
Morales, D., Agudo, I., Lopez, J.: Private set intersection: A systematic literature review. Computer Science Review  \textbf{49},  100567 (2023)

\bibitem{potyka2018measuring}
Potyka, N.: Measuring disagreement among knowledge bases. In: International Conference on Scalable Uncertainty Management. pp. 212--227. Springer (2018)

\bibitem{Priest:1979}
Priest, G.: The logic of paradox. Journal of Philosophical logic pp. 219--241 (1979)

\bibitem{ribeiro2020measuring}
Ribeiro, J.S., Sofronie-Stokkermans, V., Thimm, M.: Measuring disagreement with interpolants. In: Scalable Uncertainty Management: 14th International Conference, SUM 2020, Bozen-Bolzano, Italy, September 23--25, 2020, Proceedings 14. pp. 84--97. Springer (2020)

\bibitem{sen2013homomorphic}
Sen, J.: Homomorphic encryption-theory and application. Theory and practice of cryptography and network security protocols and technologies  \textbf{31} (2013)

\bibitem{Thimm:2019a}
Thimm, M.: Inconsistency measurement. In: Scalable Uncertainty Management: 13th International Conference, SUM 2019, Compi{\`e}gne, France, December 16--18, 2019, Proceedings 13. pp. 9--23. Springer (2019)

\bibitem{yi2014homomorphic}
Yi, X., Paulet, R., Bertino, E., Yi, X., Paulet, R., Bertino, E.: Homomorphic encryption. Springer (2014)

\end{thebibliography}

\section*{Appendix: Proofs for Technical Results}

\setcounter{theorem}{0}
\setcounter{proposition}{0}
\setcounter{lemma}{0}
\setcounter{corollary}{0}

\begin{theorem}
    Algorithm \ref{alg:computeNumDifferentAssignmentsBetweenTwoInterpretations} satisfies $\textsf{IP}, \textsf{Cor}$.
    \begin{proof}
        The protocol adheres to \textsf{IP} as no party learns anything beyond the Hamming distance, where the Hamming distance itself is the agreed-upon output (in particular, B works on ciphertexts only in line 6 and 7). The potential to infer complete input information in edge cases (e.g., output = 0 or $|\atoms|$) is intrinsic to the meaning of the output and not an additional leakage under typical definitions. It is simply a characteristic of the output itself.     
        For \textsf{Cor}, proceed by invariance: at each step, the algorithm accurately tracks the number of differing truth assignments between \(\omega_A\) and \(\omega_B\). Initially, the vectors \(v\) and \(v_B\) represent the truth assignments of \(\omega_A\) and \(\omega_B\), respectively. The XOR operation in \(v_{A \oplus B}\) ensures that each entry reflects whether the corresponding truth assignments differ (\(1\)) or are identical (\(0\)), even under encryption. The summation step correctly accumulates the total number of differing assignments - squaring each \(v_{A \oplus B}[i]\) does not change its value since \(v_{A \oplus B}[i] \in \{0, 1\}\). Finally, decryption reveals the correct count without altering the result. This invariant holds throughout the algorithm, ensuring its correctness. 
    \end{proof}
\end{theorem}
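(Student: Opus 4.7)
The plan is to split the argument cleanly into the two properties (\textsf{Cor} and \textsf{IP}) and rely on the algebraic and cryptographic facts already assembled in Section~\ref{subsec:homomorph-enc}. For \textbf{correctness}, the key observation is a trivial bit-arithmetic identity: for any $x,y\in\{0,1\}$ we have $(x-y)^{2}\in\{0,1\}$ with value $1$ iff $x\neq y$, so the Hamming distance between the bit encodings of $\omega_A$ and $\omega_B$ is exactly $\sum_{i=1}^{|\atoms|}(\omega_{A_i}-\omega_{B_i})^{2}$. I would then lift this identity into the ciphertext domain by repeatedly applying the homomorphic identities listed before Remark~\ref{remark:poly}: the mixed subtraction $\textbf{E}(k_e,\omega_{A_i})-\omega_{B_i}$ is semantically equivalent to $\textbf{E}(k_e,\omega_{A_i}-\omega_{B_i})$, the squaring $(\cdot)^{2}$ (a single multiplication) preserves the plaintext via $\textbf{E}(k_e,z)\times \textbf{E}(k_e,z)\equiv \textbf{E}(k_e,z^{2})$, and the subsequent sum is preserved by the additive homomorphism. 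Chaining these three equivalences yields $n\equiv \textbf{E}(k_e,\sum_i(\omega_{A_i}-\omega_{B_i})^{2})$, so by the decryption guarantee $\textbf{D}(k_d,n)$ equals the Hamming distance except with probability at most $2^{-\rho}$, which is exactly the correctness bound required by \textsf{Cor}.

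For \textbf{input privacy}, I would argue party by party that each side's view is simulatable from its own input and the declared output. Party $B$ receives only the ciphertext vector $v_{\text{enc}}$; by IND-CPA security of $(\textbf{K},\textbf{E},\textbf{D})$ this view is computationally indistinguishable from a vector of encryptions of, say, all zeros, so $B$ learns nothing about $\omega_A$ beyond its length $|\atoms|$ (which is the agreed-upon common signature). All subsequent values that $B$ manipulates in lines~6--7 are still ciphertexts (obtained by mixed homomorphic operations), hence again IND-CPA-hidden. Party $A$ receives only the single ciphertext $n$ and the decrypted value $n'$; by the circuit-privacy assumption in Remark~\ref{remark:poly}, $n$ is indistinguishable from a fresh encryption of $n'$, so $A$'s view can be simulated from its own input $\omega_A$ together with the output $n'$. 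This matches the standard honest-but-curious simulation-based formulation of \textsf{IP}.

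The main obstacle, in my view, is the circuit-privacy step on $A$'s side. Without it, the ciphertext $n$ could in principle carry structural information about the intermediate differences $\omega_{A_i}-\omega_{B_i}$ (e.g., through noise growth in concrete schemes), which would let $A$ infer more than the Hamming distance. The paper sidesteps this by assuming circuit privacy as a black-box (Remark~\ref{remark:poly}); in a fully formal treatment I would either invoke this assumption explicitly, or, if working in a specific scheme such as BFV or CKKS, insert a ciphertext-rerandomisation step before $B$ sends $n$ back, so that $n$ is statistically close to a fresh encryption of the sum. Once that is in place, the only remaining observation---made in passing in the author's proof---is that extremal outputs (distance $0$ or $|\atoms|$) trivially determine $\omega_B$ from $\omega_A$, but this is a property of the specified functionality, not a leakage of the protocol, and therefore does not violate \textsf{IP} as defined.
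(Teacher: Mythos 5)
Your proposal is correct and follows essentially the same route as the paper's own proof: \textsf{Cor} via the identity that $(x-y)^2$ is the indicator of $x\neq y$ lifted through the homomorphic operations, and \textsf{IP} via the observation that $B$ only ever sees ciphertexts while $A$ only learns the agreed-upon output (including the same remark that extremal outputs are a property of the functionality, not a leakage). Your version is in fact slightly more careful on two points the paper glosses over---you correctly treat the pre-squaring differences as lying in $\{-1,0,1\}$ (the paper's proof asserts $v_{A\oplus B}[i]\in\{0,1\}$, which is only true \emph{after} squaring) and you make the reliance on circuit privacy (Remark~\ref{remark:poly}) and IND-CPA explicit in a simulation-style argument rather than asserting non-leakage informally.
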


\begin{theorem}
    Algorithm \ref{alg:computeID} satisfies \textsf{IP}, \textsf{Cor}.
    \begin{proof}
        The protocol adheres to \textsf{IP} as no party learns anything beyond a binary assessment of consistency, where the assessment itself is the agreed-upon output. \textsf{Cor} is established by maintaining an invariant:   
        At each step, the algorithm correctly tracks whether there exists at least one model that satsifies both knowledge bases.
        In line one, both parties create sequences encoding satisfaction, where each index is in the same order and corresponds to the same truth value assignment(s). In line 6, for every such index, Algorithm 1$^{\mathit{binary}}$ returns 0, if the values are both 1, and 1, otherwise. Then via multiplication, if there exists at least one 0, the product is also 0. This ensures that the algorithm will return 0 if there is at least one interpretation satisfying both $K_A, K_B$, and 1, otherwise.
    \end{proof}
\end{theorem}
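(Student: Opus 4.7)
The plan is to split the theorem into its two claims and handle \textsf{IP} and \textsf{Cor} separately, reducing each to properties of the underlying core protocol $\algoneempty^{binary}$ together with a structural observation about the truth tables built in line~1 of $\algtwoempty$.

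For \textsf{Cor}, the first step is to argue that because both parties enumerate truth assignments over the shared signature $\atoms$ in the same (ascending binary) order, the $i$-th bit of $S_A$ (resp.\ $S_B$) records exactly whether the $i$-th interpretation of $\Omega(\atoms)$ satisfies $K_A$ (resp.\ $K_B$). Consequently, there exists $i$ with $S_A[i] = S_B[i] = 1$ iff $\modelSet{K_A} \cap \modelSet{K_B} \neq \emptyset$ iff $K_A \cup K_B \not\models \perp$. I would then analyze the two modified lines of $\algoneempty^{binary}$: the expression $1 - v_{enc}[i] \cdot v_B[i]$ yields (after decryption) $0$ precisely when both operands are $1$ (a common model at position $i$) and $1$ otherwise; since the entries lie in $\{0,1\}$, squaring is the identity and the product in line~7 equals $0$ if some common model exists and $1$ if every row mismatches. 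Comparing with the definition of $\incdrastic$ closes the \textsf{Cor} part.

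For \textsf{IP}, the key observation is that the sequences $S_A, S_B$ are constructed locally and never transmitted in plaintext; only the homomorphically encrypted vector $v_{enc}$ leaves $A$, and only a single ciphertext leaves $B$. By IND-CPA security and (via Remark~\ref{remark:poly}) the assumption that ciphertexts produced by homomorphic operations are computationally indistinguishable from freshly encrypted ones, $B$'s view is independent of $S_A$ beyond the (jointly agreed) final output, and $A$'s view consists only of the decrypted value $n' \in \{0,1\}$, which is exactly $\incdrastic(K_A\cup K_B)$. The inheritance of \textsf{IP} from $\algoneempty$ to $\algtwoempty$ is then immediate, since $\algtwoempty$ is simply $\algoneempty^{binary}$ wrapped around locally-produced inputs.

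The main obstacle I expect is not the algebra but explaining cleanly \emph{why the binary variant of Algorithm~1 is needed here rather than the original}. If one used the unmodified Algorithm~1, the value returned would be the Hamming distance between $S_A$ and $S_B$, leaking strictly more than $\incdrastic$; the modification collapses this to a $\{0,1\}$ answer, so the privacy argument reduces to: ``everything $A$ learns is a function of a single bit which equals the output.'' A subsidiary point that needs care is ensuring the canonical enumeration of $\Omega(\atoms)$ on both sides, which follows once $A$ and $B$ agree on $\atoms$ together with an alphabetical ordering; after that, the remainder of the proof is routine.
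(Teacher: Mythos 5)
Your proposal is correct and follows essentially the same route as the paper's proof: \textsf{Cor} via the aligned ascending-binary enumeration of truth assignments plus the observation that line~6 yields $0$ exactly when both bits are $1$ and the product in line~7 collapses to $0$ iff a common model exists, and \textsf{IP} by noting that the only decrypted value is the agreed-upon binary output itself. Your added remarks on IND-CPA indistinguishability and on why the binary variant (rather than the Hamming distance) is needed are sound elaborations of what the paper leaves implicit, not a different argument.
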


\begin{lemma}
Let $F$ be a formula and let $I$ be a three-valued interpretation that satisfies $F$.
Let $J$ be a three-valued interpretation obtained from $I$ by changing the
interpretation of a single atom to $\both$. Then
$I(F) = J(F)$ or $J(F)=\both$.
\end{lemma}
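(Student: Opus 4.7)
The plan is to proceed by structural induction on the formula $F$, showing at each step that either $I(F)=J(F)$ or $J(F)=\both$. Let $a^\ast$ denote the unique atom on which $I$ and $J$ differ, so that $J(a^\ast)=\both$ and $J(a)=I(a)$ for every other atom $a$.

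For the base case $F=a$, either $a=a^\ast$, giving $J(F)=\both$ directly, or $a\neq a^\ast$, giving $J(F)=I(F)$. The negation case $F=\neg G$ falls out of the inductive hypothesis: if $I(G)=J(G)$ then the functional definition of $\neg$ yields $I(\neg G)=J(\neg G)$, and if $J(G)=\both$ then the rule $\nu(\neg\alpha)=\both$ whenever $\nu(\alpha)=\both$ gives $J(\neg G)=\both$.

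The substantive work lies in the conjunction case $F=G\wedge H$, with disjunction handled dually by swapping $\min_{\prec_T}$ for $\max_{\prec_T}$ and the absorbing element $0$ for $1$. The induction hypothesis says each of $G,H$ is either preserved by $J$ or collapses to $\both$, so I would split into four subcases. Three of them (both preserved, both collapsed, or a mixed case where the unchanged subformula evaluates to $0$ under $I$) yield $I(F)=J(F)$ or $J(F)=\both$ immediately, using $\min_{\prec_T}(\both,\both)=\both$ and $\min_{\prec_T}(x,0)=0$. In the remaining mixed subcase, say $J(G)=\both$ while $I(H)=J(H)\in\{\both,1\}$, a direct computation from the truth order $0\prec_T\both\prec_T 1$ gives $J(F)=\min_{\prec_T}(\both,I(H))=\both$.

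The main obstacle I anticipate is simply making the mixed subcase analysis exhaustive: because the inductive hypothesis is itself a disjunction, the ``preserved'' side of the split can still take any of the three truth values, and one must verify case by case that $J(F)$ either matches $I(F)$ or is $\both$. The two absorption identities for $\min_{\prec_T}$ and $\max_{\prec_T}$ noted above are what make this bookkeeping routine, so no deeper subtlety should intervene.
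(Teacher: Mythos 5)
Your proposal is correct and matches the paper's proof in all essentials: both proceed by structural induction on $F$, with the atom base case split on whether the changed atom occurs in $F$, and the connective cases resolved by a routine case analysis using the monotone behaviour of $\neg$, $\min_{\prec_T}$ and $\max_{\prec_T}$ on \both. The only cosmetic differences are that the paper reduces $\vee$ to $\{\neg,\wedge\}$ rather than arguing by duality, and organizes the inductive step by the value of $I(F)$ rather than by which disjunct of the induction hypothesis holds.
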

\begin{proof}
We prove the claim by structural induction. We can assume w.l.o.g. that we change an atom that is contained in the formula $F$
because changing the truth value of another atom cannot affect the interpretation by truth-functionality of the logical 
connectives.

For the base case, assume that $F$ is an atom.
If we change the interpretation of $F$ to $\both$, we have $J(F) = \both$.

Since all formulas can be expressed using only $\neg$ and $\wedge$, it is sufficient to consider these cases for the induction step.

Consider $F = \neg G$.
If $I(\neg G) = 0$, then $I(G) = 1$ and by the induction assumption, we have $J(G) = 1$ or $J(G) = \both$. Thus,
$J(\neg G) = 0 = I(\neg G)$ or $J(\neg G) = \both$.
If $I(\neg G) = 1$, then $I(G) = 0$ and by the induction assumption, we have $J(G) = 0$ or $J(G) = \both$. Thus,
$J(\neg G) = 1 = I(\neg G)$ or $J(\neg G) = \both$.
If $I(\neg G) = \both$, then $I(G) = \both$ and by the induction assumption, we have $J(G) = \both = I(G)$.

Consider $F = G_1 \wedge G_2$. 
If $I( G_1 \wedge G_2) = 0$, then
$I( G_1 ) = 0$ or $I( G_2 ) = 0$.
Assume w.l.o.g. that $I(G_1) = 0$ (the case $I(G_2) = 0$ is analogous). 
By the induction assumption, $J( G_1 ) = 0$ or $J( G_1 ) = \both$.
Hence, $J( G_1 \wedge G_2) = 0$ or $J( G_1 \wedge G_2) = \both$.
If $I( G_1 \wedge G_2) = 1$, then
$I( G_1 ) = 1$ and $I( G_2 ) = 1$.
By the induction assumption, ($J( G_1 ) = 1$ or $J( G_1 ) = \both$) and ($J( G_2 ) = 1$ or $J( G_2 ) = \both$).
Hence, $J( G_1 \wedge G_2) = 1$ or $J( G_1 \wedge G_2) = \both$.
If $I( G_1 \wedge G_2) = \both$, then
$I( G_1 ) = \both$ or $I( G_2 ) = \both$.
Assume w.l.o.g. that $I(G_1) = \both$ (the case $I(G_2) = \both$ is analogous). 
By the induction assumption, $J( G_1 ) = \both$.
Hence, $J( G_1 \wedge G_2) = \both$.
\end{proof}

\begin{corollary}\label{cor:two-three-valued-models}
Let $I$ be a two-valued model of $K$ and let $J$ be a three-valued interpretation obtained from $I$ by changing the
interpretation of atoms to $\both$. Then $J$ is a three-valued model of $K$.
\end{corollary}
\begin{proof}
For all formulas $F \in K$, we have $I(F) = 1$ by assumption.
Note that two-valued interpretations are a special case of three-valued
interpretations. 
When $J$ changes the truth value of $k$ atoms to $\both$, it can be seen
as a sequence $J_0, J_1, \dots, J_k$ of three-valued models, 
$J_0 = I, J_k = J$ and $J_{i}$ is obtained from $J_{i-1}$, $i=1,\dots,k$ by changing the interpretation of a single atom to $\both$.
Hence, Lemma \ref{lemma_adding_B} guarantees that $J_i(F) = 1$ or $J_i(F) = \both$ for all $i=1,\dots,k$. Hence, $J$ satisfies $K$.
\end{proof}

\begin{proposition}
Let $x$ be the inconcistency value computed by Algorithm 3. Then $x \geq \incc(\kb_A \cup \kb_B)$.
\end{proposition}
\begin{proof}
Corollary \ref{cor:two-three-valued-models} guarantees that the 3-valued interpretations computed by Algorithm 3 are models of
three-valued models of $\kb_A \cup \kb_B$. Since $x$ is the minimal number of $\both$ found across these models, and 
$ \incc(\kb_A \cup \kb_B)$ is the minimal number across all models, we must have $x \geq \incc(\kb_A \cup \kb_B)$.
\end{proof}

\begin{proposition}
If $\kb_A \cup \kb_B$ is consistent, then Algorithm 3 will return $0$.
\end{proposition}
\begin{proof}
If $\kb_A \cup \kb_B$ is consistent, there must be an $i \in \modelSet{\kb_A \cup \kb_B}$. Hence, Algorithm 3
will compute $f(i,i) = i$ and therefore return $0$.
\end{proof}

\begin{theorem}
    Algorithm 3 satisfies $\textsf{IP}$. 
    \begin{proof}
        The protocol adheres to \textsf{IP} as no party learns the concrete models. $A$ only learns the number of differing truth assignments but cannot map this to specific models. 
        
    \end{proof}
\end{theorem}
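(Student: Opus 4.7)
The plan is to reduce the IP guarantee for $\algthreeempty$ to the previously established IP guarantee for $\algoneempty$, via a sequential-composition argument. First I would observe that the only inter-party communication in $\algthreeempty$ occurs inside the nested calls to $\algoneempty$ on line 4; outside these calls $A$ performs purely local operations (inserting the decrypted Hamming distance into $S$, and eventually taking $\mathrm{MIN}(S)$) and $B$ sends nothing on its own. Hence the full transcript of $\algthreeempty$ is the concatenation of $|\modelSet{K_A}| \cdot |\modelSet{K_B}|$ independent $\algoneempty$-transcripts, one per pair $(\omega_A,\omega_B)\in \modelSet{K_A}\times\modelSet{K_B}$.

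Next I would invoke the Theorem already proved for $\algoneempty$: each invocation reveals to $A$ only the Hamming distance between the two interpretations fed to it, while $B$ handles only ciphertexts produced under the IND-CPA-secure fully homomorphic scheme. Sequential composition of IP-preserving sub-protocols preserves IP, since the joint view of each party is simulable from the concatenation of per-invocation outputs. Consequently, $A$'s view in $\algthreeempty$ is simulable from the multiset of pairwise Hamming distances, and $B$'s view is simulable from a sequence of encryption-scheme ciphertexts which, by IND-CPA security, hides the underlying plaintexts. Neither view pins down an individual interpretation in the other party's set: for any distance $d$ observed in a given iteration there are $\binom{|\atoms|}{d}$ interpretations in $\Omega(\atoms)$ consistent with it, so no single distance (nor the multiset of distances) uniquely identifies a model contributed by the other party.

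The main obstacle is conceptual rather than technical: delineating IP from the stronger confidentiality property Con. $\algthreeempty$ does leak $|\modelSet{K_A}|$ and $|\modelSet{K_B}|$ (the iteration counts are observable to both parties), and $A$ also learns the full multiset of pairwise Hamming distances; as the surrounding text explicitly notes, this is precisely why Con fails and motivates $\algfourempty$. For IP, however, the requirement is only that the concrete elements of $\modelSet{K_A}$ and $\modelSet{K_B}$ (equivalently, the formulas of $K_A, K_B$, from which these sets are derived) are not revealed during computation. The reduction above establishes exactly this: everything $A$ and $B$ exchange reduces to ciphertexts and Hamming-distance outputs of $\algoneempty$, none of which expose the underlying interpretations.
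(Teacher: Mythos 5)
Your proposal is correct and follows essentially the same route as the paper's (very terse) proof: Algorithm 3's communication consists entirely of Algorithm 1 invocations, so the parties learn only pairwise Hamming distances, which do not identify the concrete models. Your version is considerably more rigorous --- making the sequential-composition and simulability argument explicit, and carefully separating \textsf{IP} from \textsf{Con} --- but the underlying idea is the one the paper uses.
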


\begin{proposition}
    The runtime complexity of $\algoneempty$ is $\mathcal{O}(k+|\atoms|)$, where $k$ is the asymptotic cost of generating the key pair.
    \begin{proof}
        We proceed by line. Line 1 has cost $k$ (see above). Line 2 and 3 perform constant operations over all $|\atoms|$ positions of the interpretation ($\mathcal{O}(|\atoms|)$). Line 4 is only part of the communication. Line 5 is analogous to line 2. Subtracting two vectors of length $|\atoms|$ (line 6) is $\mathcal{O}(|\atoms|)$. Note that $v_{A\oplus B}$ is also of size $|\atoms|$. Line 7 performs a constant operation over $|\atoms|$ positions. Line 8 is analogous to 4. Line 9 is constant. Thus we have $\mathcal{O}(k+5*|\atoms|+1)=\mathcal{O}(k+|\atoms|)$.
    \end{proof}    
\end{proposition}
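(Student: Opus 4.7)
The plan is a direct line-by-line cost tally of $\algoneempty$, under the standard convention (consistent with the paper's cost model) that a single homomorphic encryption, decryption, addition, or multiplication counts as $\mathcal{O}(1)$ once the key pair has been generated, and that the length of the encoded interpretation is exactly $|\atoms|$ by the shorthand encoding fixed earlier in the paper.

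First I would dispatch line~1 by hypothesis, contributing $k$. Lines~2, 3, and~5 each build or encrypt a vector of length $|\atoms|$ at constant cost per coordinate, so each is $\mathcal{O}(|\atoms|)$. Line~6 is an element-wise subtraction of two $|\atoms|$-length vectors, and line~7 is an $|\atoms|$-term sum of squared entries, both therefore $\mathcal{O}(|\atoms|)$. Line~9 is a single decryption, $\mathcal{O}(1)$. Lines~4 and~8 are communication steps rather than local computation; even if one wanted to charge for them, the payload has size $\mathcal{O}(|\atoms|)$ by Remark~\ref{remark:poly}, so the bound is unaffected. Summing all contributions gives $\mathcal{O}(k + c \cdot |\atoms|) = \mathcal{O}(k + |\atoms|)$ for some constant $c$.

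The main subtlety is justifying the unit-cost-per-homomorphic-operation assumption: this is precisely where the circuit-privacy hypothesis (Remark~\ref{remark:poly}) becomes essential, since without a polynomial bound on ciphertext size the per-operation cost could blow up with the algebraic depth of the computation and the final bound would no longer collapse to $\mathcal{O}(k + |\atoms|)$. A secondary, minor point worth mentioning explicitly is that the input interpretations $\omega_A, \omega_B$ are assumed to be given over the shared signature $\atoms$, so $\mathit{len}(\omega_A) = \mathit{len}(\omega_B) = |\atoms|$, which ensures that every per-line bound above is stated in the same parameter.
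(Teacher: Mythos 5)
Your proof is correct and follows essentially the same line-by-line cost tally as the paper's own argument, arriving at the same $\mathcal{O}(k + c\cdot|\atoms|) = \mathcal{O}(k+|\atoms|)$ conclusion. The only difference is that you make explicit the unit-cost-per-homomorphic-operation assumption and its reliance on the polynomial ciphertext bound of Remark~\ref{remark:poly}, which the paper leaves implicit; this is a reasonable clarification but not a different approach.
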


\begin{proposition}
    The lower-bound runtime complexity of $\algoneempty$ is $\Omega(k+|\atoms|)$, where $k$ is the asymptotic cost of generating the key pair.
    \begin{proof}
        Analogous to $\mathcal{O}$.
    \end{proof}    
\end{proposition}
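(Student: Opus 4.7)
The plan is to mirror the line-by-line accounting used in the matching upper-bound proof, but now replacing each $\mathcal{O}$-contribution with a matching $\Omega$-contribution. Because the previous proposition established the bound by summing per-line operation counts that are tight (constant work per coordinate), the same decomposition should give an equally tight lower bound.

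First, I would observe that line 1 alone contributes $\Omega(k)$, directly by the definition of $k$ as the asymptotic cost of generating the key pair. Next, I would argue that each of the vector-producing steps in lines 2, 3 and 5 must touch every one of the $|\atoms|$ coordinates: each such line emits a vector of length $|\atoms|$, and merely writing $|\atoms|$ outputs already requires $\Omega(|\atoms|)$ operations. The coordinate-wise subtraction in line 6 must read every entry of the length-$|\atoms|$ ciphertext vector $v_{\mathit{enc}}$, and the squared sum in line 7 must aggregate all $|\atoms|$ coordinates of $v_{A\oplus B}$; hence both of these lines are individually $\Omega(|\atoms|)$ as well.

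Summing these per-line lower bounds gives $\Omega(k) + \Omega(|\atoms|) = \Omega(k+|\atoms|)$, as required. Together with the previous proposition, this shows that $\algoneempty$ runs in $\Theta(k+|\atoms|)$.

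I do not anticipate a real obstacle, since the argument is symmetric to the $\mathcal{O}$-analysis. The only mildly subtle point is justifying that no implementation could shortcut and skip some of the $|\atoms|$ input bits: this follows from \textsf{Cor} (already established), because if some coordinate were never read on either side, the returned count of differing assignments could not correctly reflect disagreement at that coordinate, contradicting correctness. Hence the $\Omega(|\atoms|)$ term is unavoidable for any correct implementation, and the $\Omega(k)$ term is forced by the key-generation assumption.
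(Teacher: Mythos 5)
Your proof is correct and takes essentially the same route as the paper, which simply declares the lower bound ``analogous to $\mathcal{O}$'' and relies on the same per-line accounting from the upper-bound proposition: line 1 forces $\Omega(k)$ and the vector operations in lines 2--7 each touch all $|\atoms|$ coordinates. Your additional remark that correctness forces any implementation to read every bit is a small strengthening beyond what the paper argues, but the core decomposition is identical.
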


\begin{proposition}
    The runtime complexity of $\algtwoempty$ is $\mathcal{O}(k+2^{|\atoms|})$, where $k$ is the asymptotic cost of generating the key pair.
    \begin{proof}
        Straightforward from $\algoneempty$ (Note that both parties construct a bit-sequence of length $n$, where $n$ is $2^{|\atoms|}$).
    \end{proof}
\end{proposition}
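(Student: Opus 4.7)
The plan is to trace the cost of $\algtwoempty$ line-by-line and reduce the analysis to the already-established complexity of $\algoneempty$. First, I would note that in line~1 each party constructs a truth table over the shared signature $\atoms$. Since $|\Omega(\atoms)| = 2^{|\atoms|}$, the table has $2^{|\atoms|}$ rows, and enumerating these rows in ascending binary order while evaluating satisfaction of $K_A$ (resp.\ $K_B$) on each row takes $\mathcal{O}(2^{|\atoms|})$ steps (treating $|K_A|,|K_B|$ as constants, or, more pessimistically, a polynomial factor that is absorbed by the exponential). Line~2 is a trivial projection of the last column of the table and therefore also costs $\mathcal{O}(2^{|\atoms|})$.

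The crucial step is line~3, where $\algoneempty^{binary}$ is invoked on the sequences $S_A, S_B$. Here I would emphasize the following substitution: although the earlier proposition bounds the cost of $\algoneempty$ by $\mathcal{O}(k + |\atoms|)$ \emph{when the input is an interpretation of length $|\atoms|$}, the inputs in this nested call are bit sequences of length $n := 2^{|\atoms|}$. The line-by-line analysis of $\algoneempty$ carries over verbatim with $n$ in place of $|\atoms|$, because every step is either a constant operation, an operation linear in the length of the encoded vector, or the one-off key-generation step of cost $k$; the only change from $\algoneempty$ to $\algoneempty^{binary}$ is that line~7 replaces a sum of squares by a product of squares, which is still linear in the vector length. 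Hence line~3 contributes $\mathcal{O}(k + 2^{|\atoms|})$.

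Summing contributions yields $\mathcal{O}(2^{|\atoms|}) + \mathcal{O}(2^{|\atoms|}) + \mathcal{O}(k + 2^{|\atoms|}) = \mathcal{O}(k + 2^{|\atoms|})$, as claimed. The main, admittedly minor, obstacle is making the above substitution explicit: one should not just cite the bound for $\algoneempty$ and carry the symbol ``$|\atoms|$'' over blindly, since within $\algtwoempty$ the role of the per-coordinate input length is played by the truth-table height $2^{|\atoms|}$, not by the underlying number of propositional atoms. Once this reinterpretation is stated, the proof is essentially a one-line addition of the three bounds.
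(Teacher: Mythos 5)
Your proposal is correct and follows essentially the same route as the paper's proof, which simply observes that the result is straightforward from the complexity of $\algoneempty$ once one notes that the input bit-sequences have length $2^{|\atoms|}$. Your version merely makes explicit the substitution of the sequence length $2^{|\atoms|}$ for $|\atoms|$ in the line-by-line analysis of $\algoneempty$ (and the cost of building the truth tables), which is exactly the observation the paper's parenthetical remark is relying on.
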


\begin{proposition}
    The lower-bound runtime complexity of $\algtwoempty$ is $\Omega(k+2^{|\atoms|})$, where $k$ is the asymptotic cost of generating the key pair.
    \begin{proof}
        Analogous to $\mathcal{O}$.
    \end{proof}    
\end{proposition}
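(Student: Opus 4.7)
The plan is to exhibit three independent sources of cost that each contribute to the lower bound, and then observe that their sum dominates $k + 2^{|\atoms|}$. First, I would note that Algorithm~\ref{alg:computeID} invokes a key-pair generation (inside the call to Algorithm~\ref{alg:computeNumDifferentAssignmentsBetweenTwoInterpretations}$^{binary}$), which contributes $\Omega(k)$ by definition of $k$ as the asymptotic cost of that primitive.

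Second, I would argue that Line~1 of Algorithm~\ref{alg:computeID} already forces an $\Omega(2^{|\atoms|})$ lower bound on its own: both parties enumerate the truth table over $\atoms$ in ascending binary order, which has exactly $2^{|\atoms|}$ rows, and each row must be written out in order to form the bit-sequences $S_A, S_B$ whose $i^{\text{th}}$ positions encode whether the $i^{\text{th}}$ interpretation satisfies the corresponding knowledge base. Hence just producing $S_A$ and $S_B$ requires at least $2^{|\atoms|}$ elementary steps, even before any cryptographic work takes place.

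Third, I would invoke the lower bound already established for Algorithm~\ref{alg:computeNumDifferentAssignmentsBetweenTwoInterpretations} (now Algorithm~\ref{alg:computeNumDifferentAssignmentsBetweenTwoInterpretations}$^{binary}$, which differs only in constant-work replacements in Lines~6 and~7): when applied to two interpretations over an atom set of size $n$, it is $\Omega(k + n)$. In our invocation the ``atom set'' handed to Algorithm~\ref{alg:computeNumDifferentAssignmentsBetweenTwoInterpretations}$^{binary}$ corresponds to the index set of the sequences $S_A, S_B$, which has cardinality $2^{|\atoms|}$. Thus Line~3 alone contributes $\Omega(k + 2^{|\atoms|})$.

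Combining these three observations and using that $\Omega(k) + \Omega(2^{|\atoms|}) = \Omega(k + 2^{|\atoms|})$, the overall lower bound follows. The only subtlety I expect is the second step: one must justify that the truth-table enumeration really is unavoidable in the model of computation under which the upper bound $\mathcal{O}(k+2^{|\atoms|})$ was stated; since that same model was used for the matching upper bound, the lower bound is tight and the argument goes through without further assumptions.
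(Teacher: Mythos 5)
Your proposal is correct and takes essentially the same route as the paper, which simply states the lower bound is ``analogous to $\mathcal{O}$'' — i.e., the algorithm unconditionally generates a key pair, materializes two sequences of length $2^{|\atoms|}$, and runs Algorithm~\ref{alg:computeNumDifferentAssignmentsBetweenTwoInterpretations}$^{binary}$ on them, so the same cost accounting that gives the upper bound also gives the matching lower bound. Your version merely spells out the three cost sources explicitly; the subtlety you flag in your last paragraph is not really an issue, since the bound is on the running time of this specific algorithm (which always enumerates the full truth table) rather than on the intrinsic complexity of the problem.
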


\begin{proposition}
    Let $A$ have $s$ models and $B$ have $t$ models. The runtime complexity of $\algthreeempty$ is $\mathcal{O}(k+s*t*|\atoms|)$ (assuming the key is only generated once), where $k$ is the asymptotic cost of generating the key pair. As both parties could have up to $2^{|\atoms|}$ models, this relates to $\mathcal{O}(k+2^{2^{|\atoms|}}*|\atoms|)$. 
    \begin{proof}
        The outer loop iterates over all $s$ models in \(\modelSet{\kb_A}\). The inner loop iterates over all $t$ models in \(\modelSet{\kb_B}\). For each pair of models (each of length $|\atoms|$), $\algoneempty$ is called, which has a complexity of \(\mathcal{O}(k + |\atoms|)\). 
    \end{proof}
\end{proposition}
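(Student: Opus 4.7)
The plan is to decompose $\algthreeempty$ into its elementary components and multiply per-iteration cost by iteration count. First, I would isolate line 1 (set initialization, constant time) and line 8 (taking $\mathrm{MIN}(S)$, linear in $|S|$), and focus on the dominant contribution, which is the doubly nested loop in lines 2--7. The outer loop runs exactly $s = |\modelSet{\kb_A}|$ times and the inner loop runs exactly $t = |\modelSet{\kb_B}|$ times, so the body (line 4, a call to $\algoneempty$ on interpretations of length $|\atoms|$, plus line 5, insertion into $S$) executes $s \cdot t$ times in total.

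Next, I would invoke the previously established bound $\mathcal{O}(k + |\atoms|)$ on $\algoneempty$. Under the stated hypothesis that the key pair is generated only once (amortized over all calls, rather than regenerated per invocation), the key-setup cost $k$ is paid a single time, while the per-call cost reduces to $\mathcal{O}(|\atoms|)$. Aggregated, the nested loop contributes $\mathcal{O}(s \cdot t \cdot |\atoms|)$ on top of the one-off additive $k$. The insertion in line 5 is $\mathcal{O}(1)$ per iteration and therefore absorbed; line 8 is $\mathcal{O}(s \cdot t)$ since $|S| \leq s \cdot t$ and is likewise dominated. Summing, the total runtime is $\mathcal{O}(k + s \cdot t \cdot |\atoms|)$, which is the first part of the claim.

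For the worst-case substitution, I would note that $\modelSet{\kb} \subseteq \Omega(\atoms)$ for every KB $\kb$ over $\atoms$, and $|\Omega(\atoms)| = 2^{|\atoms|}$. Hence $s \leq 2^{|\atoms|}$ and $t \leq 2^{|\atoms|}$, so $s \cdot t \leq 2^{2|\atoms|}$. Since $2^{2|\atoms|} \leq 2^{2^{|\atoms|}}$ for $|\atoms| \geq 2$, the product falls within $\mathcal{O}(2^{2^{|\atoms|}})$, giving the coarser but valid upper bound $\mathcal{O}(k + 2^{2^{|\atoms|}} \cdot |\atoms|)$ stated in the proposition.

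The main subtlety I expect is the single-key-generation assumption rather than any deep technical obstacle: without that assumption, key generation would live inside the nested loop and contribute $s \cdot t \cdot k$, turning $k$ into a multiplicative rather than additive term. Pinning down that the $k$-cost is paid exactly once (outside the loop) is what allows the clean decomposition above; everything else is a routine loop-counting argument combined with the already-proved per-call bound for $\algoneempty$.
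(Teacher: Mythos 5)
Your proof is correct and takes essentially the same route as the paper's: count the $s \cdot t$ iterations of the nested loop and multiply by the $\mathcal{O}(k+|\atoms|)$ per-call cost of \algoneempty, with the key-generation cost $k$ amortized into a single additive term. Your extra bookkeeping (the dominated $\mathrm{MIN}$ step, and the observation that $s\cdot t \le 2^{2|\atoms|}$ is actually a tighter bound than the stated $2^{2^{|\atoms|}}$) only makes the same argument more careful.
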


\begin{proposition}
    Let $A$ have $s$ models and $B$ have $t$ models. 
    The lower-bound runtime complexity of $\algthreeempty$ is $\Omega(k+|\atoms|)$, where $k$ is the asymptotic cost of generating the key pair.
    \begin{proof}
        In general the costs are $(k+s*t*|\atoms|)$ (assuming the key is only generated once), where $k$ is the cost of generating the key pair. In the best case, both parties have 1 model each (recall the KBs are consistent per assumption). This relates to $\Omega(k+1*1*|\atoms|)$. 
    \end{proof}    
\end{proposition}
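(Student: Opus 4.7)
The plan is to establish the lower bound by exhibiting the cheapest admissible input to $\algthreeempty$ and showing that even on such an input the algorithm must perform work proportional to $k + |\atoms|$. Since the statement is a lower bound over \emph{all} inputs, it suffices to pick a single ``smallest'' input and show no implementation following the given pseudocode can do better than $\Omega(k + |\atoms|)$ on it.

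First, I would fix the assumption from Section~\ref{sec:approaches} that both $\kb_A$ and $\kb_B$ are individually consistent, which forces $s = |\modelSet{\kb_A}| \geq 1$ and $t = |\modelSet{\kb_B}| \geq 1$. So the smallest legal input has exactly one model per party. Next, I would observe that any execution must (i) generate the key pair on line~1 of the inner $\algoneempty$ call, contributing cost $\Omega(k)$ (using the same asymptotic cost $k$ as in the $\mathcal{O}$-analysis), and (ii) perform at least one iteration of the nested \textbf{forall} loops on lines~2--7, which invokes $\algoneempty$ on two interpretations of length $|\atoms|$. Invoking the already-established lower bound $\algoneempty \in \Omega(k + |\atoms|)$ on this single call contributes an additional $\Omega(|\atoms|)$ beyond the shared key-generation cost. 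Finally, the $\min$ operation on the singleton set $S$ on the return line is $\Omega(1)$ and does not affect the bound. Adding these contributions yields $\Omega(k + 1 \cdot 1 \cdot |\atoms|) = \Omega(k + |\atoms|)$.

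The main (minor) obstacle is purely a matter of presentation: justifying that we may legitimately restrict attention to a best-case instance when proving a lower bound of the form $\Omega(\cdot)$ over all inputs, and being careful not to double-count the key generation cost $k$ between the outer algorithm and its inner call to $\algoneempty$. Once these bookkeeping points are made explicit, the bound follows immediately from the per-iteration lower bound on $\algoneempty$ together with the $s, t \geq 1$ observation, mirroring the structure of the already-presented $\mathcal{O}$-bound proof for $\algthreeempty$.
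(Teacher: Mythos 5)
Your proposal is correct and follows essentially the same route as the paper: both arguments use the individual-consistency assumption to force $s,t \geq 1$, instantiate the best case with one model per party, and reduce to the cost $k + 1\cdot 1\cdot|\atoms|$ of a single $\algoneempty$ invocation. Your additional care about not double-counting the key-generation cost and about justifying the restriction to a best-case instance is a welcome refinement of the same argument, not a different one.
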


\begin{proposition}
    Let $A$ have $s$ models and $B$ have $t$ models. The runtime complexity of $\algfourempty$ is $\mathcal{O}(k+2^{2^{|\atoms|}}*|\atoms|)$ (assuming the key is only generated once), where $k$ is the asymptotic cost of generating the key pair. 
    \begin{proof}
        Lines 1-8 are analogous to $\algthreeempty$ ($\mathcal{O}(k+2^{2^{|\atoms|}}*|\atoms|)$). Note both parties could have up to $2^{|\atoms|}$ models, so $S$ can have a size of $2^{|\atoms|}$.        
        Line 9 is constant. Line 10 calls $|\atoms|$ times an operation that requires $2^{|\atoms|}$ subtractions. Lines 13-16 run $|\atoms|$ times. Line 17 is constant. So we have $\mathcal{O}(k+2^{2^{|\atoms|}}*|\atoms|+1+|\atoms|*2^|\atoms|+|\atoms|)=\mathcal{O}(k+2^{2^{|\atoms|}}*|\atoms|)$.
    \end{proof}
\end{proposition}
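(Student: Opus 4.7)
The plan is to perform a line-by-line cost accounting of $\algfourempty$, grouping the algorithm into four blocks: the padding and nested-loop section (Lines 1-8), the construction of $L$ (Lines 9-12), the prime-encryption post-processing (Lines 13-16), and the final decryption and index lookup (Line 17 and the return). I will bound each block separately and then argue that the first block dominates.

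For Lines 1-8, I will first note that Line 1 constructs the padded multiset $\modelSet{K_A}'$ of size $2^{|\atoms|}$ in time $O(2^{|\atoms|})$. Lines 3-8 then reproduce the nested loop of $\algthreeempty$, with the outer loop running $|\modelSet{K_A}'| = 2^{|\atoms|}$ times and the inner loop running $|\modelSet{K_B}| \leq 2^{|\atoms|}$ times. Each inner iteration invokes $\algoneempty$, which by the earlier proposition runs in $O(k + |\atoms|)$; since the key pair is generated only once (as assumed in the statement), the term $k$ appears only additively in the total. Multiplying the loop sizes by the per-iteration cost gives the dominant contribution $O(k + 2^{2^{|\atoms|}} \cdot |\atoms|)$ under the paper's convention for substituting the worst-case model-set sizes.

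For the remaining blocks I will argue that all contributions are absorbed. Line 9 is $O(1)$. Lines 10-12 iterate $|\atoms| + 1$ times, each iteration forming a product over the entries of $S$ whose size is bounded by the number of pairs enumerated in Lines 3-8, so this block is already dominated by the first. Lines 13-16 run $|\atoms| + 1$ times and perform a constant-time prime draw followed by a product over a prefix of $L$, contributing at most $O(|\atoms|^2)$. Finally, Line 17 decrypts a list of length $|\atoms| + 1$ in $O(|\atoms|)$, and the final scan for the first zero entry is likewise $O(|\atoms|)$.

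Summing all contributions yields an expression of the form $O(k + 2^{2^{|\atoms|}} \cdot |\atoms| + 1 + |\atoms| \cdot 2^{|\atoms|} + |\atoms|^2 + |\atoms|)$, which collapses to $O(k + 2^{2^{|\atoms|}} \cdot |\atoms|)$, matching the claim. The main (modest) obstacle is being consistent with the paper's notational convention when re-expressing the worst-case product of the two model-set sizes inside the exponent, and double-checking that the outer one-time key-generation cost and the per-call cost of $\algoneempty$ are combined additively rather than multiplicatively; once that is in place, the rest is a routine summation of dominated terms.
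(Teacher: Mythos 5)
Your proposal is correct and takes essentially the same route as the paper's own proof: a line-by-line cost accounting in which the padded nested loop of Lines 1--8 (size $2^{|\atoms|}\times 2^{|\atoms|}$ pairs, each invoking $\algoneempty$ at $\mathcal{O}(|\atoms|)$ per call with the key cost $k$ counted once additively) yields the dominant term, and the post-processing blocks are shown to be absorbed. Your accounting is in fact slightly more careful than the paper's on the size of $S$ and on the cost of Lines 13--16, but these refinements do not change the argument or the final bound.
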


\begin{proposition}
    Let $A$ have $s$ models and $B$ have $t$ models. 
    The lower-bound runtime complexity of $\algfourempty$ is $\Omega(k+{2^{|\atoms|}}*|\atoms|))$, where $k$ is the asymptotic cost of generating the key pair.
    \begin{proof}
        $A$ pads to $2^|\atoms|$ models, so even if $B$ has 1 model only, line 5 is called $2^|\atoms|$ times, where line 5's cost is $|\atoms|$. Lines 9-17 are anaologous to $\mathcal{O}$.
    \end{proof}    
\end{proposition}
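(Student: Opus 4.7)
The plan is to obtain the lower bound by isolating the contributions of the three unavoidable cost sources in $\algfourempty$: (i) key generation, (ii) the padded double loop in lines 1--8, and (iii) the per-call cost of $\algoneempty$ inside that loop. I will argue that these contributions are forced regardless of the actual number of models, because the padding step decouples the outer-loop length from $s$, and since the knowledge bases are assumed consistent we always have $t \geq 1$.

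First, I would note that line 1 constructs a multiset $\modelSet{K_A}'$ of size exactly $2^{|\atoms|}$. This is a hard constraint built into the algorithm (it is precisely the padding used to enforce $\textsf{Con}$), so the outer loop at line 3 executes $2^{|\atoms|}$ iterations in every run, independently of $s$. Second, because $K_A$ and $K_B$ are assumed individually consistent (stated at the start of Section~3), we have $t \geq 1$, and so the inner loop at line 4 executes at least once per outer iteration. Together, lines 5--6 are executed at least $2^{|\atoms|}$ times.

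Third, each execution of line 5 invokes $\algoneempty$ on two interpretations over $\atoms$. By the previously established lower bound for $\algoneempty$, each such call costs $\Omega(|\atoms|)$ (ignoring the one-time key generation, which is amortised separately and has already been performed in line 1 of the earlier protocol call or is reused as stated in the proposition). Multiplying the number of inner-loop executions by the cost per call yields $\Omega(2^{|\atoms|} \cdot |\atoms|)$ for lines 3--8 alone. Adding the $k$ cost of key generation, which is incurred exactly once at the start, gives a total of $\Omega(k + 2^{|\atoms|} \cdot |\atoms|)$. The remaining lines (9--17) can only add to this cost and therefore do not affect the lower bound.

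The only subtle point, and the one I would be most careful about, is ensuring that the padding in line 1 genuinely forces $2^{|\atoms|}$ outer iterations in every case, rather than being optimisable away when $s$ is small. This is immediate from the algorithm's specification of $\modelSet{K_A}'$ as a multiset of length $2^{|\atoms|}$, whose construction is the very mechanism by which $\algfourempty$ hides $s$ from $B$; suppressing it would violate $\textsf{Con}$ and thus cannot be done by any correct implementation. Once this is observed, the calculation combining the forced iteration count with the $\Omega(|\atoms|)$ lower bound of $\algoneempty$ closes the argument.
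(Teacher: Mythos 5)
Your proposal is correct and follows essentially the same route as the paper's proof: the padding in line~1 forces $2^{|\atoms|}$ outer iterations regardless of $s$, the consistency assumption gives $t \geq 1$ so line~5 runs at least $2^{|\atoms|}$ times at cost $\Omega(|\atoms|)$ each, and lines~9--17 cannot lower the bound. Your version merely spells out the justifications (why the padding cannot be optimised away, why $t \geq 1$) that the paper leaves implicit.
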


\end{document}